\begin{document}

\title{Message Delivery in the Plane by Robots with Different Speeds
%\thanks{This paper is eligible for best student paper award}
}

%\author{}
%\institute{}

\author{
    Jared Coleman\inst{1}
    \and
    Evangelos Kranakis\inst{2}\inst{5}
    \and
    Danny Krizanc\inst{3}
    \and
    Oscar Morales Ponce\inst{4}
}

\institute{
    Department of Computer Science, University of Southern California, 
    Los Angeles, CA, USA
    \and
    School of Computer Science, Carleton University, Ottawa, Ontario, Canada.
    \and
    Department of Mathematics \& Computer Science, Wesleyan University, 
    Middletown CT, USA.
    \and
    Department of Computer Science, California State University, Long Beach
    \and
    Research supported in part by NSERC Discovery grant.
}

\maketitle

\begin{abstract}
We study a fundamental cooperative message-delivery problem on the plane.
Assume $n$ robots which can move in any direction, are placed arbitrarily on the 
plane.
Robots each have their own maximum speed and can communicate with each other 
face-to-face (i.e., when they are at the same location at the same time). 
There are also two designated points on the plane, $S$ (the {\em source}) and 
$D$ (the {\em destination}). 
The robots are required to transmit the message from the source to the 
destination as quickly as possible by face-to-face message passing.
We consider both the offline setting where all information (the locations and 
maximum speeds of the robots) are known in advance and the online setting where 
each robot knows only its own position and speed along with the positions of 
$S$ and $D$.

In the offline case, we discover an important connection between the problem 
for two-robot systems and the well-known Apollonius circle which we employ to 
design an optimal algorithm. 
We also propose a $\sqrt 2$ approximation algorithm for systems with any number 
of robots.
In the online setting, we provide an algorithm with competitive ratio 
$\frac 17 \left( 5+ 4 \sqrt{2} \right)$ for two-robot systems and show that the 
same algorithm has a competitive ratio less than $2$ for systems with any number 
of robots.
We also show these results are tight for the given algorithm. 
Finally, we give two lower bounds (employing different arguments) on the 
competitive ratio of any online algorithm, one of $1.0391$ and the other of 
$1.0405$.

\vspace{0.1cm}
\noindent
{\bf Key words and phrases.}
Delivery,
Face-to-Face,
Plane,
Pony express,
Robot,
Speed
%\vspace{0.1cm}
%\noindent
%{\bf \large This paper is eligible for best student paper award}
\end{abstract}

\section{Introduction}\label{sec:intro}
We study the problem of delivering a message in minimum time from a source to a 
destination using autonomous mobile robots with different maximum speeds.
We extend the work on this communication problem studied previously on graphs~\cite{AnayaCCLPV16,Bartschi0M18,carvalho2019efficient,carvalho2021fast}. 
%the Pony Express communication problem, introduced 
%in~\cite{coleman2021pony} in order to model message delivery with robots moving 
%on a line segment, and a similar delivery problem by mobile robots discussed in~\cite{carvalho2019efficient,carvalho2021fast}. 
In our setting, the robots are initially distributed in arbitrary locations in 
the plane and the locations of the source and destination are known by all. 
The robots may move with their own (maximum) speed. 
Robots cooperate by exchanging information (the message) using face-to-face 
(F2F) communication.
We study message transmission and allow messages to be replicated (as opposed
to package delivery).
The goal is to give an algorithm which minimizes the time required to deliver 
the message from the source to the destination through a series of 
F2F message transfers.
In this paper we study how to complete this task efficiently and propose various 
centralized offline and distributed online algorithms which take into account 
the knowledge that the robots have about their speeds and initial locations.

\subsection{Model, Notation and Terminology}\label{sec:model}
%Points
The setup of our pony express problem will be in the Euclidean plane and points 
will be identified with their cartesian coordinates.
We use capital letters to denote points and lower-case letters with subscripts 
to denote their components (e.g. point $A = (a_1, a_2)$). 
For any points $A, B, C$, $|AB|$ denotes the Euclidean distance between $A$ and 
$B$, $\angle (ABC)$ denotes the angle formed by $A, B, C$ in this order, and 
$\triangle (ABC)$ denotes the triangle formed by $A, B, C$. 
Finally, $C(A, r)$ denotes a circle centered at $A$ with radius $r$.

%Robots
Assume that $n$ robots $r_1,r_2,\ldots, r_n$ are placed at arbitrary positions 
in the Euclidean plane. 
The respective speeds of the robots are $v_1, v_2, \ldots,v_n$. 
The movement of a robot is determined by a well-defined trajectory. 
A robot trajectory is a continuous function $t \to f(t)$, with $f(t)$ the 
location of the robot at time $t$, such that $|f(t) - f(t')| \leq v |t-t'|$, 
for all $t, t'$, where $v$ is the speed of the robot. 
A robot can move with its own constant speed and during the traversal of its 
trajectory it may stop and/or change direction instantaneously and at any time.
Robot communication is F2F in that two robots can communicate (instantaneously) 
with each other only when they are co-located.

%Algorithms
Algorithms describe the trajectories robots will follow and we will take into 
account the time it takes the algorithm to conclude the delivery task from the 
start, obtaining the message at a given source $S$, and eventually delivering 
it to a given destination $D$.
In general, we are interested in offline and online algorithms. 
In the offline setting, the locations and speeds of all robots are known in 
advance and are available to a central authority that assigns trajectories to 
the robots. 
In the online setting, the robots know only their own initial position and 
speed, along with the positions of $S$ and $D$. 
To measure the performance of our online algorithms, we consider their 
competitive ratio defined as follows. 
Let $t^*(I)$ be the optimal delivery time for an instance $I$ of a given 
problem and $t_A(I)$ be the time needed by some online algorithm $A$ for the 
same instance.
The competitive ratio of $A$ is $\max_I \frac{t_A(I)}{t^*(I)}.$
Our goal is to find online algorithms that minimize this competitive ratio.

\subsection{Related work}\label{sec:related}
Communicating mobile robots or agents have been used to address problems such 
as search, exploration, broadcasting and converge-casting, patrolling, 
connectivity maintenance, and area coverage (see \cite{chalopin2006mobile}). 
For example, \cite{beregrobustness} addresses the problem of how well a group 
of collaborating robots with limited communication range is able to monitor a 
given geographical space. 
To this end, they study broadcasting and coverage resilience, which refers to 
the minimum number of robots whose removal may disconnect the network and 
result in uncovered areas, respectively. 
Similarly, rendezvous is a relevant communication paradigm and in 
\cite{chuang2021,czyzowicz2020gathering} the authors investigate rendezvous 
in a continuous domain under the presence of spies. 
A related study on message transmission in a synchronized multi-robot system may be found in~\cite{beregrobustness}. 
Another application is patrolling whereby mobile robots are required to keep 
perpetually moving along a specified domain so as to minimize the time a point 
in the domain is left unvisited by an agent, e.g., 
see~\cite{czyzowicz2019patrolling} for a related survey.

Data delivery and converge-cast with energy exchange under a centralized 
scheduler were studied in~\cite{czyzowicz2016communication}. 
A restricted version concerns $n$ mobile agents of limited energy that are 
placed on a straight line and which need to collectively deliver a single piece 
of data from a given source point $S$ to a given target point $D$ on the line 
can be found in~\cite{chalopin2013data}. 
In \cite{chalopin2014data} it is shown that deciding whether the agents can 
deliver the data is (weakly) NP-complete. 
Additional research under various conditions and topological assumptions can be 
found in \cite{bartschi2017truthful} which studies the game-theoretic task of 
selecting mobile agents to deliver multiple items on a network and optimizing 
or approximating the total energy consumption over all selected agents, 
in~\cite{bartschi2016energy,bartschi2017energy,bilo2021new} which study data 
delivery and combine energy and time efficiency, and 
in~\cite{das2015collaborative,das2018collaborative} which are concerned with 
collaborative exploration in various topologies. 

Our problem was previously studied on graphs in~\cite{AnayaCCLPV16,Bartschi0M18,carvalho2019efficient,carvalho2021fast}. In particular it is shown in~\cite{carvalho2019efficient} that the problem can be solved with $k$ agents on an $n$-node, $m$-edge weighted graph in time $O(kn \log n + km)$. We use this algorithm in the development of our approximation algorithm.  %Autonomous mobile agents are initially placed on distinct nodes of a weighted graph; each autonomous mobile agent has a predefined velocity but is only allowed to move along the edges of the graph. The goal is for the robots to deliver an item. Delivery is accomplished by the collective effort of the mobile agents, which can carry and exchange the item among them and the objective is to compute a delivery schedule that minimizes the delivery time of the package. 

Our current work is related to the Pony Express communication problem  proposed in~\cite{coleman2021pony}. In that paper, the authors provide both optimal offline and online algorithms for the anycast and broadcast problems in the case where the underlying domain was a continuous line segment.  %In our current work the robots are supposed to deliver a message (which can be replicated) and can roam freely in the Euclidean plane and as such it differs from the work of~\cite{carvalho2019efficient,carvalho2021fast} where the robots are delivering a ``package'' and their movement is constrained on edges of the underlying graph. 
To our knowledge, the planar case studied in our paper has not been considered previously.

\subsection{Outline and results of the paper}\label{sec:outline}
In Section~\ref{sec:offline_two_robots} we propose an optimal 
offline algorithm for two robots. 
For ease of exposition, we first consider the case when the slower robot
starts at the source and then the general case of arbitrary starting positions.
In Section~\ref{sec:sqrt2} we study the offline multirobot case. 
We propose an algorithm which approximates the optimal delivery time to within 
a factor of $\sqrt{2}$. 
Section~\ref{sec:Online Upper Bounds} is dedicated to online algorithms. 
For two robots we give an algorithm with competitive ratio of 
$\frac 17 \left( 5+ 4 \sqrt{2} \right)$ and show that for $n$ robots, this same 
algorithm has a competitive ratio of at most $2$. 
We also analyze lower bounds for this specific algorithm showing these bounds 
are tight. 
In Section~\ref{sec:lower_bounds_two_robots} we prove lower bounds on the 
competitive ratio of arbitrary online algorithms. 
We discuss two approaches, one where the position of a robot is unknown and the 
other where the speed of a robot is unknown. 
These different approaches provide lower bounds of 1.0391 and 1.0405, 
respectively. 
We conclude in Section~\ref{sec:conclusion} by discussing possibilities for 
additional research in this area. 
% Due to space constraints, some proofs are omitted and can be found in the 
% appendices.

\section{Optimal Offline Algorithm for Two Robots}
\label{sec:offline_two_robots}

In this section, we will consider two robots $r_v$ and $r_1$ which can move with 
respective constant speeds $v$ and $1$ ($v > 1$) and design optimal offline 
algorithms with respect to the F2F communication model (observe that by scaling
the distances, setting the speed of the slow robot to be $1$ yields no loss of 
generality.)
Let $L$ and $K$ be the starting positions of robots $r_1$ and $r_v$, 
respectively.
There are three cases to consider:
\begin{enumerate}
    \item $\frac{|KS|}{v} \leq |LS|$: the fast robot can get to $S$ before the
        slow robot. 
        In this case, it is clear that in the optimal solution, the fast robot
        should move to $S$, acquire the message, and carry it to $D$.
    \item $\frac{|KD|}{v} \geq |LS| + |SD|$: the slow robot can deliver the 
        message to $D$ before the fast robot can even reach $D$. 
        In this case, the optimal solution is also clear.
        The slow robot should move to $S$, acquire the message, and carry it 
        to $D$.
    \item In all other cases, the slow robot can get to $S$ before the fast 
        robot, but the fast robot can get to the destination faster.
        The optimal solution, then, must involve a handover between the robots 
        at some point $M$ in the plane.
\end{enumerate}

For the first two cases, the optimal solution is trivial.
The third case, however, is not as we must find the point at which 
the robots meet.
First, we characterize the optimal meeting point $M$ for Case~3 through a series 
of lemmas.

\begin{lemma}\label{lm:optimal_m}
    For Case~$3$, there exists an optimal solution such that if $M$ is the 
    handover, then $|LS| + |SM| = \frac{|KM|}{v}$.
\end{lemma}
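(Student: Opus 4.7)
The plan is a perturbation argument on the function
$T(M) := \max\{t_1(M), t_2(M)\} + |MD|/v$,
where $t_1(M) = |LS| + |SM|$ is the slow robot's arrival time at $M$ via $S$ and $t_2(M) = |KM|/v$ is the fast robot's arrival time at $M$. The optimum Case~3 delivery time equals $\min_M T(M)$, since after the handover the message is clearly best carried the remaining $|MD|$ by the faster robot. Let $M^*$ attain this minimum. I will show that either $t_1(M^*) = t_2(M^*)$, or else an equally good alternative $M^{**}$ with this equality can be produced.

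First, assume $t_1(M^*) > t_2(M^*)$. On a neighborhood of $M^*$ the max is $t_1$, so $T(M) = |LS| + |SM| + |MD|/v$, whose gradient at $M^*$ is the unit vector from $S$ to $M^*$ plus $(1/v)$ times the unit vector from $D$ to $M^*$. These summands have magnitudes $1$ and $1/v < 1$, so they cannot cancel; hence a direction of strict descent exists, and a small step in it preserves the strict inequality $t_1 > t_2$ while decreasing $T$, contradicting optimality. Symmetrically, if $t_1(M^*) < t_2(M^*)$ the locally active objective is $(|KM|+|MD|)/v$; its gradient vanishes only when the unit vectors from $K$ to $M^*$ and from $D$ to $M^*$ cancel, forcing $M^* \in (K,D)$ and $T^* = |KD|/v$. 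Now along segment $KD$ the continuous function $t_1 - t_2$ is negative at $M^*$ and, by Case~3, positive at $M=D$ (since $|LS|+|SD| > |KD|/v$). The intermediate value theorem yields a point $M^{**}$ strictly between $M^*$ and $D$ on segment $KD$ with $t_1(M^{**}) = t_2(M^{**})$; because $M^{**} \in [K,D]$, we still have $T(M^{**}) = |KD|/v = T^*$, giving the required optimal handover.

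The main technical hurdle is the second case: one must verify that the substituted point $M^{**}$ on segment $KD$ remains optimal, which follows from the observation that $T$ reduces to the constant $|KD|/v$ at every point of $[K,D]$ satisfying $t_1 \leq t_2$. A minor subtlety is differentiability, since the gradient arguments tacitly assume $M^* \notin \{S, K, D\}$; but the Case~3 hypotheses $|LS| < |KS|/v$ and $|LS|+|SD| > |KD|/v$ already strictly separate $t_1$ from $t_2$ at those three points, so a negligible perturbation reduces the analysis to a regular point without changing optimality.
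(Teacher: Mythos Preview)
Your argument is correct and takes a genuinely different route from the paper. The paper's proof is constructive and geometric: given a putative optimum $M$ with unequal arrival times, the early robot walks along the segment toward the late robot, they meet at some $M'$, and two applications of the triangle inequality (one for each of the two cases ``$r_1$ arrives first'' and ``$r_v$ arrives first'') show $T(M') \le T(M)$. Your proof instead analyzes $T(M)=\max\{t_1,t_2\}+|MD|/v$ directly via first-order conditions: on the branch $t_1>t_2$ the gradient has norm at least $1-1/v>0$, so no interior minimum can live there; on the branch $t_1<t_2$ the only critical points lie on the open segment $KD$, where $T\equiv |KD|/v$, and then the intermediate value theorem (using the Case~3 inequality $|LS|+|SD|>|KD|/v$ at the endpoint $D$) locates an equally good balanced point $M^{**}$ on that segment. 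The paper's approach is more elementary and gives an explicit improved meeting point; yours exposes the optimization structure (one branch has no stationary points, the other has a whole segment of minimizers) and would port more readily to variants. One small remark: your handling of the non-differentiable point $M^*=D$ is a bit hand-wavy; the cleanest fix is to note directly that $T(D)=|LS|+|SD|>|KD|/v=T(M)$ for any $M$ on $[K,D]$ with $t_1(M)\le t_2(M)$, so $D$ is never optimal in Case~3.
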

\begin{inlineproof}
    For the sake of contradiction, suppose $|LS| + |SM| \neq \frac{|KM|}{v}$.
    First, it is obvious that $r_1$ should move directly toward $S$ and then 
    directly toward $M$ and, similarly, $r_v$ should move directly toward $M$.
    Any other path could only increase the time to deliver the message.
    Then, since $|LS| + |SM| \neq \frac{|KM|}{v}$, either $r_1$ or $r_v$ must
    arrive at $M$ before the other. 
    Thus, there must be a time where one robot is waiting at $M$ for the other
    to arrive. 
    Let $t$ be the time that the first robot arrives to $M$ and $K^\prime$ be 
    the position of the other robot at time $t$. 
    We claim that an equal or better solution than waiting at $M$ would be for 
    the first robot to move along $K^\prime M$ until it meets the other robot at 
    some point $M^\prime$.
    Then the faster of the two robots carries the message from $M^\prime$ to 
    $D$ (Figure~\ref{fig:constant_movement}).
    \begin{figure} %{l}{0.3\textwidth}
        \begin{center}
            \includegraphics[width=0.25\textwidth]{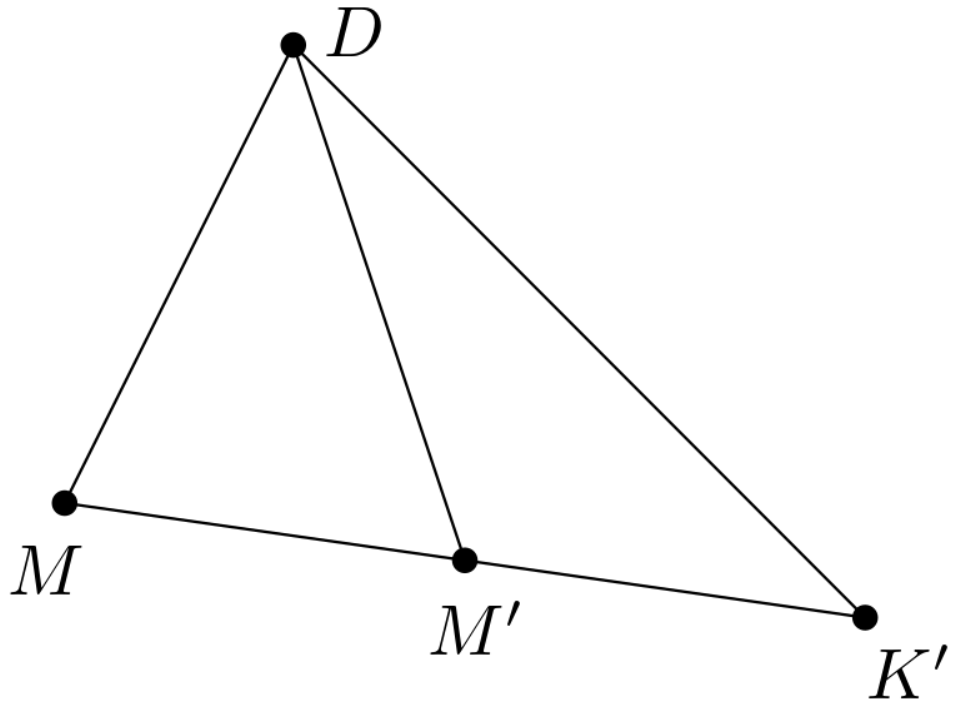}
        \end{center}
        \caption{The configuration at time $t$ where the first robot has arrived
        at $M$, the second is at position $K^\prime$. A better solution than 
        waiting for the second to arrive at $M$ would be for the robots to meet 
        at $M^\prime$.}
        \label{fig:constant_movement}
    \end{figure}
    If $r_1$ arrives at $M$ first, then if it waits at $M$ for $r_v$ 
    to arrive, the total delivery time is
    $t + \frac{|K^\prime M| + |MD|}{v}$, but since $\triangle K^\prime MD$ is 
    clearly larger (in perimeter) than $\triangle K^\prime M^\prime D$, then 
    \begin{align*}
        |K^\prime M^\prime| + |M^\prime D| \leq |K^\prime M| + |MD|
        \Rightarrow t + \frac{|K^\prime M^\prime| 
            + |M^\prime D|}{v} \leq t + \frac{|K^\prime M| + |MD|}{v}
    \end{align*}
    Thus, meeting at $M^\prime$ results in a quicker delivery, a contradiction
    to the assumption that $M$ is optimal.
    If $r_v$ arrives at $M$ first, then if it waits at $M$ for $r_1$ the 
    total delivery time is $t + |K^\prime M| + \frac{|MD|}{v}$, but 
    \begin{align*}
        |MM^\prime| + |M^\prime D| \leq |K^\prime M| + |MD| &\Rightarrow t 
            + \frac{|MM^\prime| + |M^\prime D|}{v} 
            \leq t + \frac{|K^\prime M| + |MD|}{v} \\ 
        &\Rightarrow t + |K^\prime M^\prime| + \frac{|M^\prime D|}{v} 
            \leq t + |K^\prime M| + \frac{|MD|}{v}
    \end{align*}
    Again, meeting at $M^\prime$ results in an equal or quicker delivery, and 
    so by contradiction, there must exist an optimal solution where 
    $|LS| + |SM| = \frac{|KM|}{v}$.
    \qed    
\end{inlineproof}
Intuitively, Lemma~\ref{lm:optimal_m} says that robots must move at their 
maximum speeds directly towards the location they will acquire the message and 
then directly toward the location they handover or deliver the message.
This restricts the set of feasible meeting points to the set of points in the 
plane such that both robots, moving in one direction at their maximum speeds,
meet at the same time.
For the case where the slow robot starts at the source ($L=S$), this is 
directly related to an ancient theorem by the Greek philosopher Apollonius,
which states ``the trajectory traced by a point $P$ which moves in such a 
way that its Euclidean distance from a given point $S$ is a constant multiple 
of its Euclidean distance from another point $K$ is a 
circle''~\cite{ogilvy1990excursions}. 
As a consequence, if the robots $r_1, r_v$ start at positions $S, K$, 
respectively, then the locus of points at which the two robots may travel 
directly towards and meet at the same time is the circle of Apollonius 
(see Figure~\ref{fig:circle1}). 
\begin{wrapfigure}{r}{0.4\textwidth}
    \centering
    \includegraphics[width=0.35\textwidth]{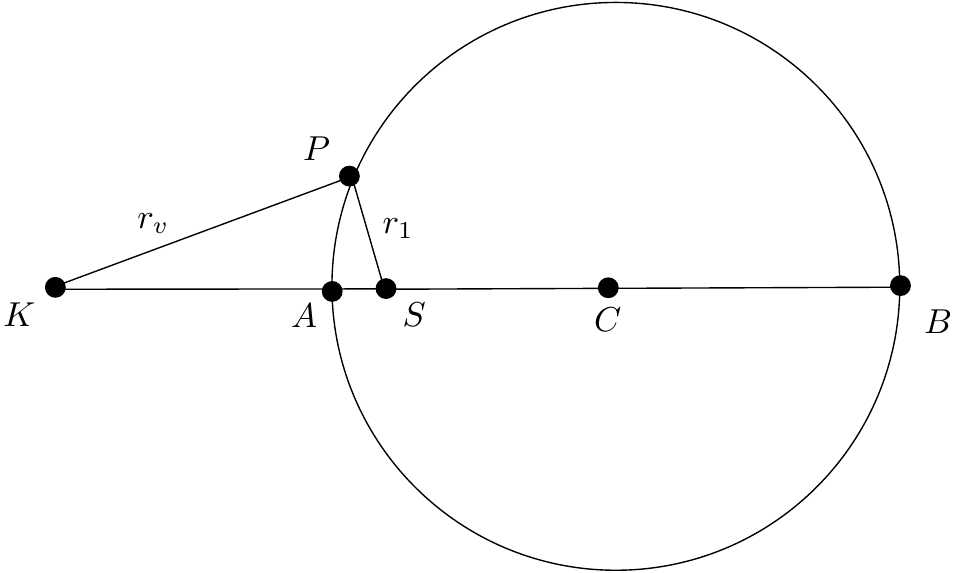}
    \caption{The Apollonius circle is the locus of points $P$ such that robots 
    $r_v$ and $r_1$ are equal time away from their starting positions $K$ and 
    $S$, respectively.}
    \label{fig:circle1}
\end{wrapfigure}
This circle, then is the locus of all possible handover points between the two 
robots.
The precise statement in the context of mobile robots is stated in 
Lemma~\ref{lm:apollonius}.

\begin{lemma}\label{lm:apollonius}
Two robots $r_v$ and $r_1$ with speeds $v$ and $1$ ($v>1$) are initially 
    placed at points $K$ and $S$, respectively. 
    The locus of points $P$ such that robots $r_v$ and $r_1$ are equal time 
    away from points $K$ and $S$, respectively, i.e., $\frac{|PK|}{|PS|} = v$, 
    forms a circle with center $C$ and radius $R$ so that
    \begin{align}\label{radius}
    C = S+ \frac{S -K}{v^2-1}    \mbox{ and }
    R = \frac{v |SK|}{v^2-1}
    \end{align}   
\end{lemma}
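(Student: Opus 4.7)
The plan is to reduce the locus condition to the classical Apollonius setting and then extract $C$ and $R$ by identifying two specific points on the circle. First, I would observe that ``$r_v$ and $r_1$ are equal time away from $K$ and $S$'' is exactly the condition $|PK|/v = |PS|$, i.e., $|PK|/|PS|=v$. The Apollonius theorem stated in the excerpt immediately yields that the locus is a circle; the remaining task is to pin down its center and radius.

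Next, I would intersect the locus with the line through $S$ and $K$ itself. Writing $\hat u = (K-S)/|SK|$ and parameterizing points on this line as $P = S + x\hat u$, the condition $|PK| = v|PS|$ becomes $\bigl| |SK| - x\bigr| = v|x|$. This has exactly two solutions: the internal division point $x_1 = |SK|/(v+1)$ (between $S$ and $K$) and, since $v>1$, the external point $x_2 = -|SK|/(v-1)$, lying on the opposite side of $S$ from $K$. Because the entire locus sits on a single circle, these two collinear locus points must be diametrically opposite on it.

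Finally, I would read off the center and radius as the midpoint and half-length of this diameter, namely
\[
C \;=\; S + \frac{x_1 + x_2}{2}\,\hat u, \qquad R \;=\; \frac{|x_2 - x_1|}{2},
\]
and use the elementary identities
\[
\frac{1}{v+1} - \frac{1}{v-1} = -\frac{2}{v^2-1}, \qquad \frac{1}{v+1} + \frac{1}{v-1} = \frac{2v}{v^2-1},
\]
together with $\hat u = (K-S)/|SK|$, to conclude $C = S - (K-S)/(v^2-1) = S + (S-K)/(v^2-1)$ and $R = v|SK|/(v^2-1)$, exactly matching \eqref{radius}.

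The main obstacle is really just bookkeeping: one must be careful with signs when locating the external division point (for $v>1$ it lies beyond $S$ away from $K$, not beyond $K$). As a sanity check, I would also sketch the purely algebraic route: expanding $|P-K|^2 = v^2|P-S|^2$ and grouping terms yields $(v^2-1)|P|^2 - 2P\cdot(v^2 S - K) + (v^2|S|^2 - |K|^2) = 0$; dividing by $v^2-1$ and completing the square delivers the same center $C = (v^2 S - K)/(v^2-1) = S + (S-K)/(v^2-1)$ and radius $R = v|SK|/(v^2-1)$, confirming the formulas.
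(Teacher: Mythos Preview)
Your proposal is correct. The paper's own proof is a single sentence: ``use the representation of $S,K,P$ in cartesian coordinates and solve $|PK|/|PS|=v$,'' i.e., precisely the algebraic expand-and-complete-the-square route you include as your sanity check. Your primary argument---intersecting the locus with the line $SK$ to locate the two diametrically opposite points and reading off $C$ and $R$ from them---is a slightly different, more geometric derivation; it is equally elementary and has the minor advantage of making clear where the circle sits relative to $S$ and $K$, but otherwise both routes amount to the same classical computation.
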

\begin{inlineproof}
    The proof follows easily by using the representation of the points $S, K, P$ 
    in cartesian coordinates and solving the equation $\frac{|PK|}{|PS|} = v$.
    \qed
\end{inlineproof}

The following definition of the Apollonius Circle will be used throughout this 
paper.
\begin{definition}[Apollonius Circle]
    The circle with center $C$ and radius $R$ given by Equations~\eqref{radius} 
    is called the Apollonius circle between robots $r_v$ and $r_1$ when their 
    respective starting positions are $K, S$.
\end{definition}

For instances of the problem where $L=S$ and whose optimal solutions do not 
involve either robot delivering the message by itself, the previous discussion
results in the following lemma whose proof follows directly from 
Lemmas~\ref{lm:optimal_m}~and~\ref{lm:apollonius}.
\begin{lemma}\label{lm:on_apollonius}
    The optimal meeting point $M$ is the point on the Apollonius circle between 
    robots $r_1$ and $r_v$ which minimizes the total delivery time
    $|SM| + \frac{|MD|}{v} = \frac{|KM| + |MD|}{v}$. 
    \qed
\end{lemma}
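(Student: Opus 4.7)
The plan is to simply combine the two preceding lemmas and interpret them in the special case $L=S$. By Lemma~\ref{lm:optimal_m}, any optimal handover point $M$ satisfies $|LS| + |SM| = \frac{|KM|}{v}$; substituting $L=S$ gives $|SM| = \frac{|KM|}{v}$, i.e. $\frac{|KM|}{|SM|} = v$. First I would observe that this ratio condition is exactly the defining property of the Apollonius locus in Lemma~\ref{lm:apollonius} (with $v>1$), so $M$ must lie on the Apollonius circle associated with the starting positions $K$ and $S$ of $r_v$ and $r_1$.

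Next I would compute the delivery time in this configuration. Since Lemma~\ref{lm:optimal_m} also tells us that in an optimal schedule the slow robot moves in a straight line from $S$ to $M$ at speed $1$ and the fast robot, upon receiving the message at $M$, moves straight to $D$ at speed $v$, the total time is $|SM| + \frac{|MD|}{v}$. Using the Apollonius identity $|SM| = \frac{|KM|}{v}$ rewrites this as $\frac{|KM|+|MD|}{v}$, matching the expression in the statement.

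Finally, I would conclude by optimality: among all points on the Apollonius circle (the set of \emph{all} feasible meeting points by the previous paragraph), the optimum schedule must select one that minimizes the delivery time, and conversely any such minimizing point on the circle yields a valid schedule by construction. Hence $M$ is precisely a minimizer of $|SM|+\frac{|MD|}{v}$ (equivalently $\frac{|KM|+|MD|}{v}$) over the Apollonius circle.

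There is no real obstacle here; the lemma is essentially a corollary whose only content is the explicit identification of the locus of feasible handover points with the Apollonius circle. The only subtlety worth mentioning is that Case~3 hypothesis of Section~\ref{sec:offline_two_robots} is what guarantees a handover is needed in an optimal solution, so the minimizer on the circle genuinely corresponds to the optimum (and neither of the single-robot strategies beats it); I would note this briefly and then invoke Lemmas~\ref{lm:optimal_m} and~\ref{lm:apollonius} to close the argument with \qed.
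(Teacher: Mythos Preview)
Your proposal is correct and follows exactly the paper's approach: the paper presents this lemma as an immediate corollary, stating only that the proof follows directly from Lemmas~\ref{lm:optimal_m} and~\ref{lm:apollonius}, which is precisely what you spell out.
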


\subsection{Optimal algorithm when a robot starts at the source}
\label{sec:optimal_two_robots}

First we give an algorithm in the restricted case where one robot starts at the 
source where the message is located ($L=S$). 
\begin{figure}[!htb] 
    \centering
    \includegraphics[width=0.35\textwidth]{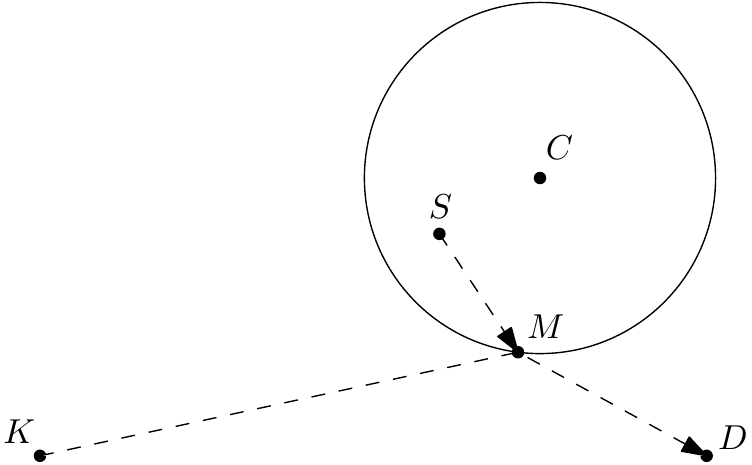} 
    \caption{The two-robot delivery problem with a slow robot at $S$, a faster 
    robot at $K$, and an Apollonius circle between the two centered at $C$.}
    \label{fig:op_offline_gen}
\end{figure}
Let $S=(0, 0)$ be the source, $K$ be the starting position of the fast robot
which we assume to be on the $x$ axis, and $D = (x, y)$ the destination. 
Without loss of generality, we assume $y \geq 0$ (if $y < 0$, the instance can
be reflected about the $x$ axis and solved equivalently, since $K$ is on the 
$x$ axis).
By Lemma~\ref{lm:on_apollonius}, our goal is to find the point $M$ on the 
robots' Apollonius circle which minimizes the delivery time 
$|SM| + \frac{|MD|}{v} = \frac{|KM| + |MD|}{v}$ (see~Figure~\ref{fig:op_offline_gen}).

Consider the following offline algorithm for computing the optimal delivery
time.

\begin{algorithm}[H]
    \caption{Optimal Two-Robot Algorithm with the 
    Slow Robot Starting at the Source}\label{alg:two_robot_source_opt}
    \begin{algorithmic}[1]
        \If {$\frac{|KD|}{v} \geq |SD|$}\label{line:case_slow}
            \State \textbf{return} $|SD|$
        \EndIf
        \State $\beta \gets \angle SKD$
		\If{$\sin(\beta) \leq \frac{1}{v}$}
  				\label{line:case_intersect}
				 \State $\alpha \gets \pi - \beta - \arcsin (v \sin \beta )$
        		\State $M \gets \frac{|SK|}{v^2-1} (v \cos \alpha-1, v \sin \alpha)$%
				\If{ $|KD| < |KM|$}
                     \State $M \gets $ point on Apollonius circle such that $CM$ bisects the  angle~$\angle (DMK)$\label{line:m_bisector}
				\EndIf
		\Else  % KD does not intersect the Apollonius circle
         
            %\State $\alpha \gets \pi - \beta - \arcsin (v \sin \beta )$
            %\State $M \gets \frac{|SK|}{v^2-1} (v \cos \alpha-1, v \sin \alpha)$%
            \State $M \gets $ point on Apollonius circle such that $CM$ bisects the  angle~$\angle (DMK)$\label{line:m_bisector2}
		 \EndIf

        %\State $\alpha \gets \pi - \beta - \arcsin (v \sin \beta )$
        %\State $M \gets \frac{|SK|}{v^2-1} (v \cos \alpha-1, v \sin \alpha)$%
         %   \label{line:case_intersect_m}
       % \If{$\sin(\beta) > \frac{1}{v}$ or $|KD| < |KM|$%
        %    \label{line:case_intersect}}
        %    \State $M \gets $ point such that $CM$ bisects the 
         %   angle~$\angle (DMK)$\label{line:m_bisector}
        %\EndIf
        \State \textbf{return} $\frac{|KM| + |MD|}{v}$
    \end{algorithmic}
\end{algorithm}

\begin{theorem}
\label{thm:source}
    Algorithm~\ref{alg:two_robot_source_opt} returns the optimal delivery time 
    for instances with two robots $r_1$ and $r_v$ with speeds $1$ and $v$, and 
    starting positions $S$ and $K$, respectively.
    Algorithm~\ref{alg:two_robot_source_opt} can be implemented using a constant number of operations (including trigonometric functions).
\end{theorem}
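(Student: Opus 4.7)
I would verify the algorithm branch by branch using the structural lemmas. The first check handles Case~2 of the preamble, where $|KD|/v \ge |SD|$ and the slow robot alone delivers the message in the optimal time $|SD|$. Otherwise we are in Case~3, and Lemma~\ref{lm:on_apollonius} reduces the task to minimizing $g(M):=|KM|+|MD|$ over $M$ on the Apollonius circle, with delivery time $g(M)/v$. By the triangle inequality $g(M)\ge|KD|$, with equality iff $M$ lies on the segment $KD$. Moreover, the inequality $|KD|/v<|SD|$ defining Case~3 is equivalent to $D$ lying strictly outside the Apollonius circle, since by the Apollonius characterization outside points are exactly those with $|PK|/|PS|<v$.

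Next I would derive the coordinate formula for $M$ used by the algorithm. The perpendicular distance from $C$ to the line through $K$ in the direction of $D$ equals $v^2|SK|\sin\beta/(v^2-1)$, which is at most $R=v|SK|/(v^2-1)$ precisely when $\sin\beta\le 1/v$; this is the geometric content of the outer ``if''. Applying the law of sines in triangle $KCM$ (with $|KC|=v^2|SK|/(v^2-1)$, $|CM|=R$, $\angle MKC=\beta$) yields $\sin(\angle KMC)=v\sin\beta$; choosing the acute branch corresponds to the far intersection, and the three angles summing to $\pi$ gives the angle at $C$ as $\alpha=\pi-\beta-\arcsin(v\sin\beta)$. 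Writing $M=C+R(\cos\alpha,\sin\alpha)$ then produces exactly the coordinate expression computed by the algorithm. When $|KD|\ge|KM|$, this $M$ lies on the segment $KD$, so $g(M)=|KD|$ matches the lower bound and is optimal.

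The remaining sub-cases ($\sin\beta>1/v$, or $\sin\beta\le 1/v$ with $|KD|<|KM|$) are exactly those in which the segment $KD$ is disjoint from the circle: since $D$ is outside the circle in Case~3, $|KD|<|KM|$ forces $D$ to lie between $K$ and both intersection points, so the chord on line $KD$ lies entirely past $D$ and does not meet segment $KD$. In these regimes $g$ attains its minimum strictly above $|KD|$, at an interior critical point of the constrained optimization. A standard Lagrange-multiplier calculation shows that the sum of the unit vectors from $K$ to $M$ and from $D$ to $M$ must be parallel to $M-C$; since the sum of two unit vectors lies along the bisector of their angle, this is precisely the condition that $CM$ bisects $\angle DMK$, matching the two ``bisector'' lines of the algorithm. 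Together with the circle equation this reduces to a low-degree polynomial system solvable in closed form with a constant number of arithmetic operations, the correct root being identified by comparing the (at most two) candidate values of $g$. This also justifies the constant-operations half of the theorem.

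\textbf{Main obstacle.} The most delicate step is the bisector branch: one must show both that the first-order condition isolates the global minimum on the arc (rather than a saddle or a local maximum) and that the resulting algebraic system admits a genuinely constant-time closed-form solution. A careful monotonicity argument along the arc of the Apollonius circle on the side of line $KD$ containing $D$ is needed to guarantee that the algorithm selects the correct root without any combinatorial search.
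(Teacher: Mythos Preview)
Your proposal is correct and structurally parallels the paper's proof: both dispose of Case~2 directly, reduce Case~3 via Lemma~\ref{lm:on_apollonius} to minimizing $|KM|+|MD|$ over the Apollonius circle, derive the explicit formula for $M$ from the law of sines in $\triangle KCM$, and characterize the remaining sub-case by the angle-bisector condition on $CM$. The one substantive difference is how the bisector condition is obtained. The paper argues via confocal ellipses: the level sets of $|KM|+|MD|$ are ellipses with foci $K$ and $D$, the minimum over the circle occurs where the smallest such ellipse is tangent to it, and the reflection property of the ellipse says the normal at the tangency point bisects the angle to the foci. You instead reach the same conclusion by Lagrange multipliers, noting that the sum of the two unit gradient vectors must be parallel to $M-C$ and that a sum of unit vectors lies along their internal bisector. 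These are equivalent; the ellipse picture has the advantage that global minimality is immediate (the smallest tangent ellipse visibly gives the minimum), which directly dissolves the obstacle you flag at the end, whereas your route needs the extra monotonicity/two-critical-points remark you sketch. For the constant-operations claim the paper, like you, does not write out the algebra but reduces the bisector condition to a single trigonometric equation and defers to a computer-algebra computation. Incidentally, your identification of the computed $M$ as the \emph{far} intersection is correct (the paper's phrase ``closest to $K$'' is a slip), though as you observe this is immaterial once one knows $D$ lies outside the Apollonius circle in Case~3.
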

\begin{inlineproof}
    First, note that Case~1 (from the three cases at the beginning of the 
    section) is not considered since the slow robot, $r_1$ is assumed to start 
    at the source.
    Case~2 is obviously handled by line~\ref{line:case_slow} in the algorithm.
    Case~3 is divided into two subcases based on whether or not the condition 
    in line~\ref{line:case_intersect} is satisfied.
    First, we consider the case where it is not.
    Let $\beta$ be the angle $\angle SKD$.
    Observe that if $KD$ is tangent to the Apollonius circle
    (Figure~\ref{fig:max_intersection_angle}), then
    $\sin(\beta) = \frac{|SK|v}{v^2-1}\cdot\frac{v^2-1}{|SK|v^2} = \frac{1}{v}$.
    Clearly for any smaller value for $\beta$, $KD$ intersects the Apollonius 
    circle at two points (and for any larger value, $KD$ does not intersect the 
    circle).
    
    \begin{figure}[!htb]
        \centering
        \includegraphics[width=0.35\textwidth]{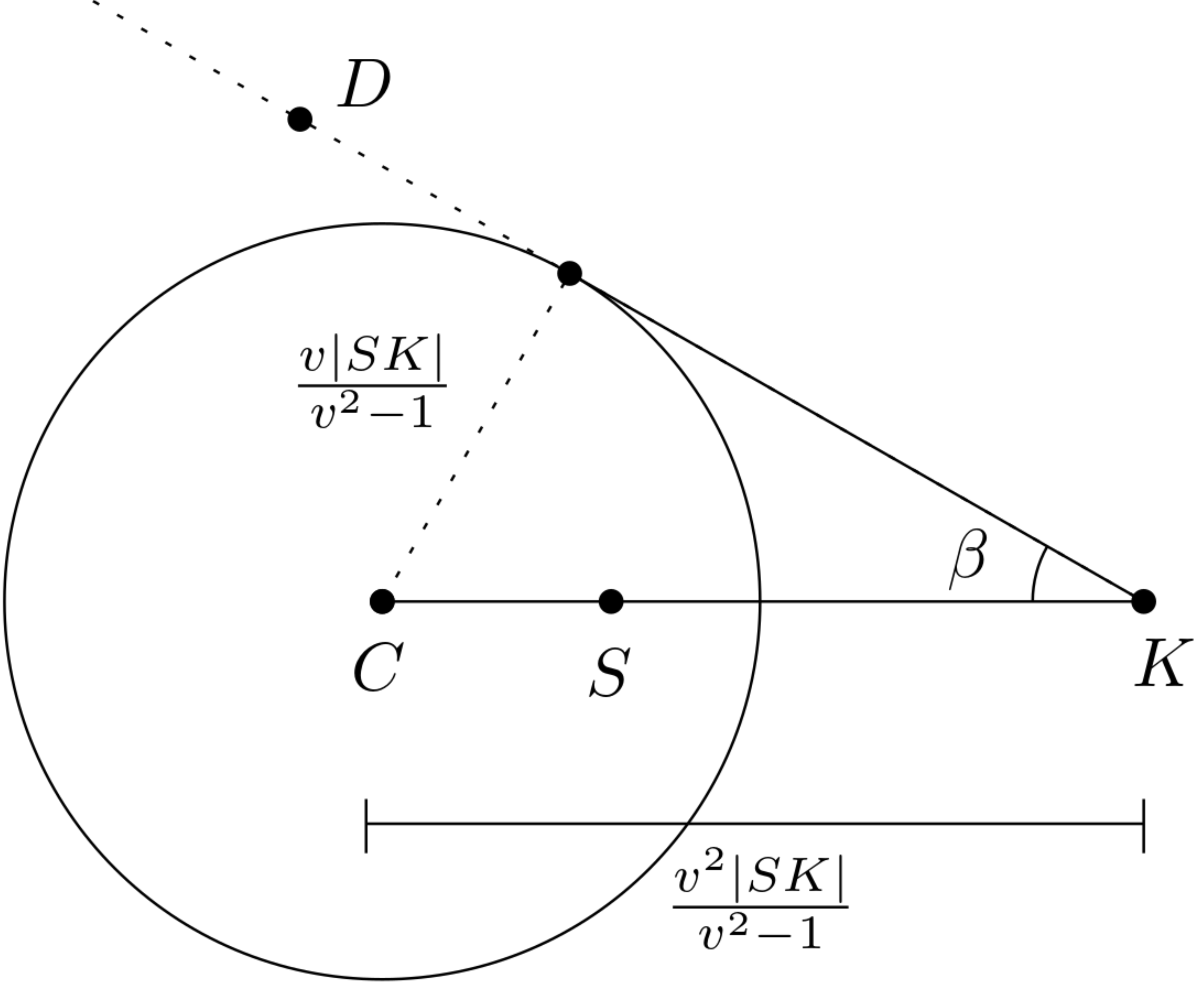}
        \caption{The maximum $\beta$ such that $KD$ intersects the Apollonius 
        Circle}
        \label{fig:max_intersection_angle}
    \end{figure}

    Then, let $\alpha = \angle KCM$ and $\gamma = \angle KMC$ 
    (Figure~\ref{fig:m_cases} left). By the law of sines
    $
        \frac{ (v^2 - 1)\sin \gamma }{|SK|v^2} 
            = \frac{ (v^2 - 1)\sin\beta}{|SK|v} \mbox{ and }
        \gamma = \arcsin(v \sin\beta) .
    $
    Thus $\alpha = \pi - \beta - \arcsin(v \sin\beta)$ and 
    $M = \frac{|SK|}{v^2-1} (v \cos\alpha - 1, v \sin\alpha)$ is just 
    the associated point on the Apollonius circle.

    Observe $M$ is the intersection point closest to $K$.
    Since the condition in line~\ref{line:case_intersect} is satisfied, $r_v$ 
    can move directly toward $D$ and, without veering from a direct path, meet 
    $r_1$ at $M$, acquire the message, and continue towards $D$ to deliver the 
    message.
    We know, since the first case was not satisfied, that $r_v$ can reach the 
    destination before $r_1$ can, so this is clearly the optimal trajectory.
    \begin{figure}[!htb]
        \centering 
        \includegraphics[width=0.56\textwidth]{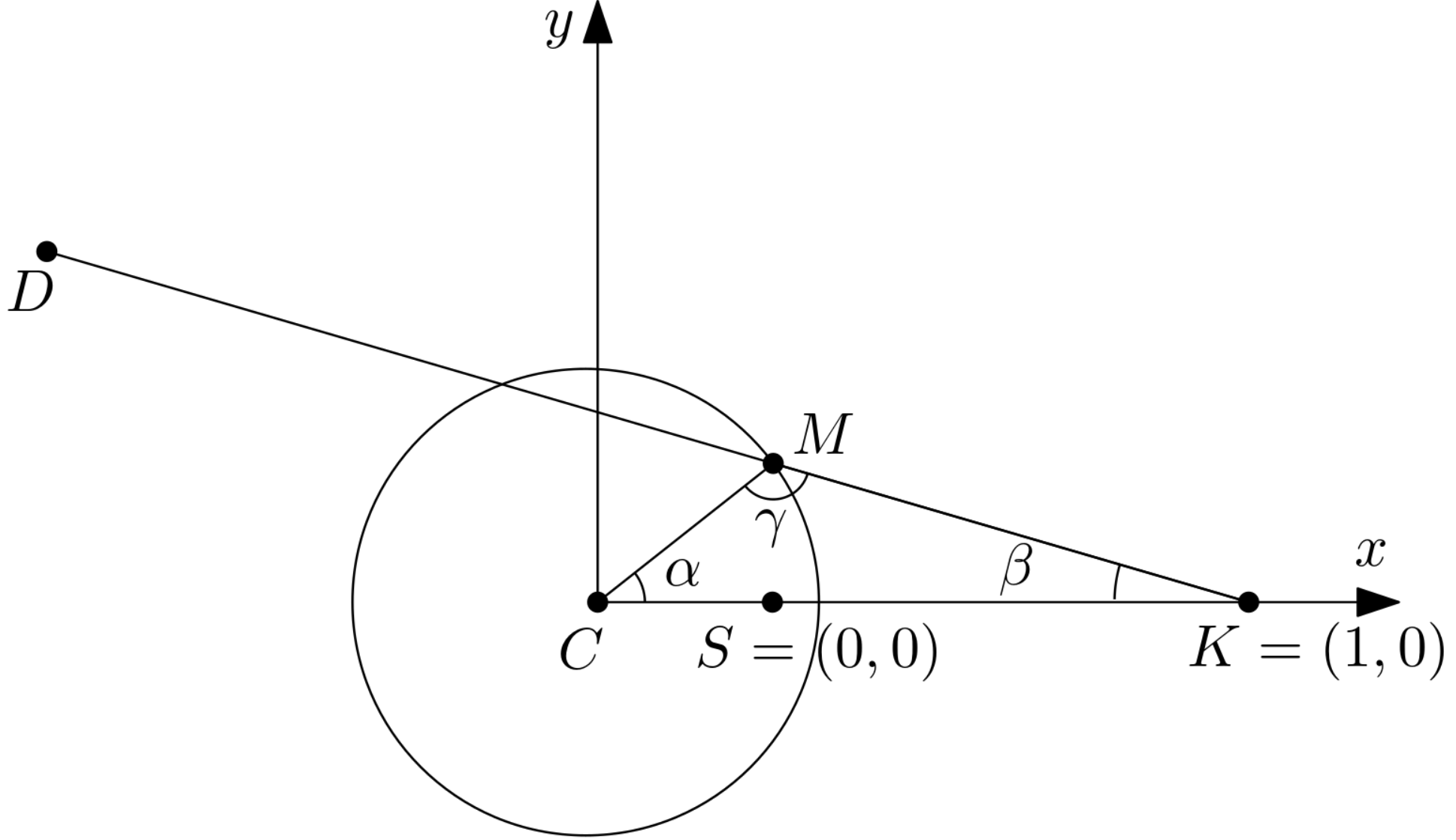}
        \includegraphics[width=0.43\textwidth]{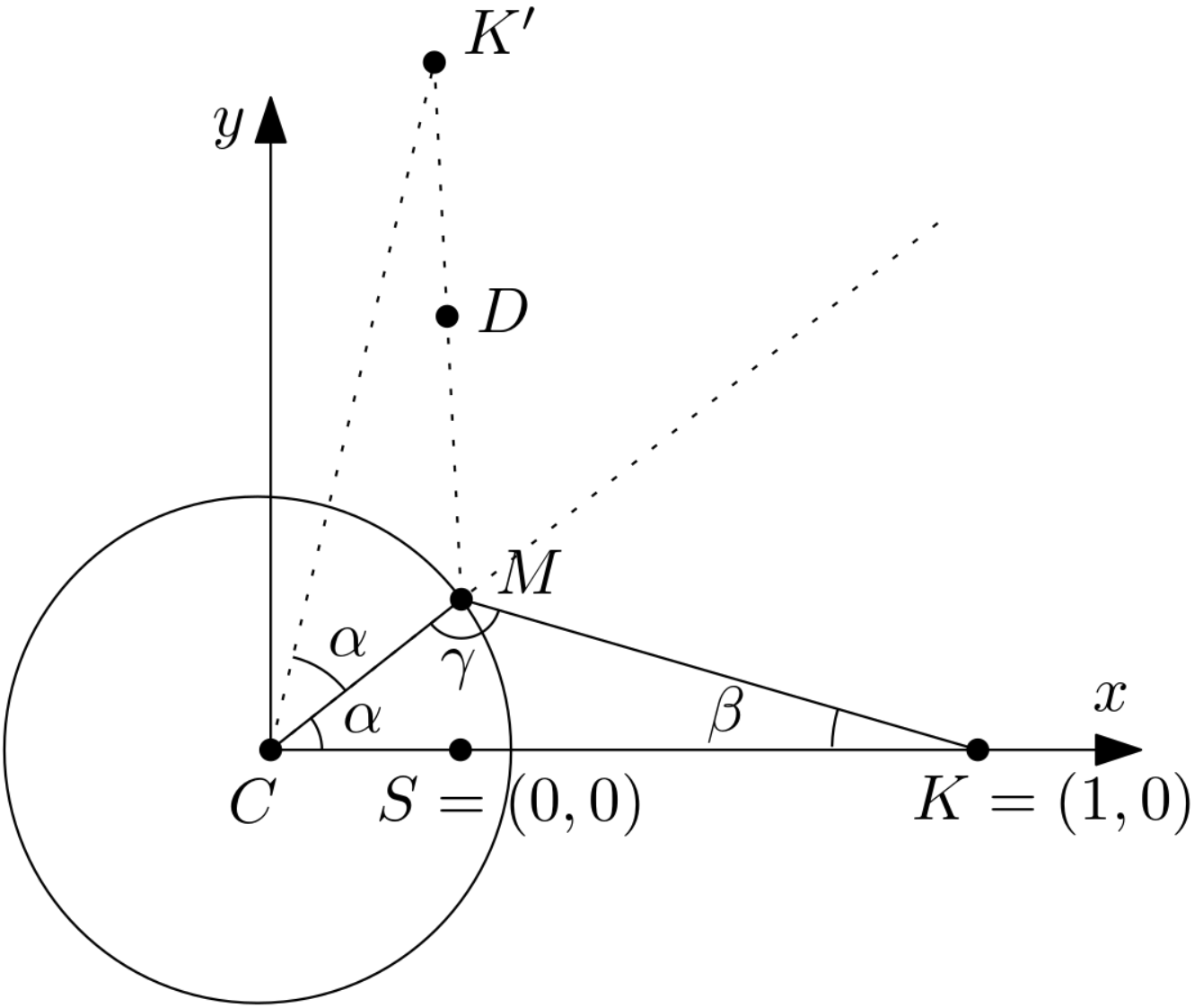}
        \caption{On the left, $D$ is such that the line-segment $KD$ intersects with the 
        Apollonius Circle. In the general case (right), $CM$ must bisect $\angle (DMK)$, and 
        thus $\alpha$ such that $MK^\prime$ be collinear to $MD$, where 
        $K^\prime = |CK|(\cos(\alpha), \sin(\alpha))$.}
        \label{fig:m_cases} 
    \end{figure}
    Now, suppose the condition in line~\ref{line:case_intersect} is not 
    satisfied.
    Consider the ellipse with foci $K$ and $D$ whose semi-major axis has length 
    $\frac 12 vt$ for some time $t \geq 0$.
    Then, by a defining property of an ellipse, the sum of the distances from 
    each foci to a point on the ellipse is equal to a constant value $vt$.
    Consequently, a robot starting at $K$ with speed $v$ takes exactly $t$ time 
    to travel to a point on the ellipse and then to $D$. 
    Observe that if the ellipse and Apollonius circle intersect, then the two 
    robots can meet at one of the intersection points and, by the previous 
    statement, the fast robot can deliver the message in time $t$.
    If they intersect at two points, though, then any point on the Apollonius 
    circle between these two intersections would yield a better solution. 
    The solution, then, is to find the minimal $t$ which causes the Apollonius 
    circle and the ellipse to intersect at exactly one point $M$. 
    Thus $CM$ must be normal to both the Apollonius circle and the ellipse. 
    That $CM$, therefore, must bisect $\angle (DMK)$ follows from a well-known 
    property of the ellipse, namely that a normal line through a point on an 
    ellipse bisects the angle it forms with the ellipse's foci.

    Next, we show the algorithm can be implemented to run using a constant number of operations (including trigonometric functions).
    The only lines in the algorithm that are not clearly computable with a constant number of operations 
    are lines~\ref{line:m_bisector} and \ref{line:m_bisector2}.
    To show that $M$ can be computed in constant time, we provide a formulation
    which can be given as input to Equation Solving tools (e.g.,  Mathematica)
    to find a closed-form solution~\footnote{
        \href{https://www.wolframcloud.com/obj/oscar.moralesponce/Published/Pony_Express_Theorem_1}{Link to Mathematica solution for Theorem~\ref{thm:source}}
    }.
    Let $\alpha = \angle KCM$ and $K^\prime$ be the point given by rotating 
    $K$ $2\alpha$ around $C$ (into the positive half-plane, 
    Figure~\ref{fig:m_cases} right).
    Observe that if $CM$ bisects $\angle (DMK)$, then $DK^\prime$
    is collinear with $MD$, or:
    $
        \frac{|SK| \cos(2\alpha) - |CS| - x}{|SK| \sin(2\alpha) - y}
        =  \frac{x - \cos\alpha - |CS|}{y - \sin\alpha}
    $        
where $D = (x,y)$.
    \qed
\end{inlineproof}

\subsection{Optimal algorithm in the general case}
In this subsection we consider the more general case where the slow robot does 
not start at the source. 
Let the starting positions of source and destination be $S = (s_1, s_2)$ and 
$D = (d_1, d_2)$ and let the robots $r_v$ and $r_1$ start from arbitrary points 
$K = (k_1, k_2)$ and $L = (l_1,l_2)$ in the plane, respectively. 
Again, we are interested in finding the point $M = (x_1,x_2)$ for the third case (from 
the cases at the beginning of the section), since optimal solutions for the 
first two cases are trivial to find.
As depicted in Figure~\ref{fig:circle3}, 
\begin{wrapfigure}{l}{0.5\textwidth} 
\begin{center}
\includegraphics[width=0.4\textwidth]{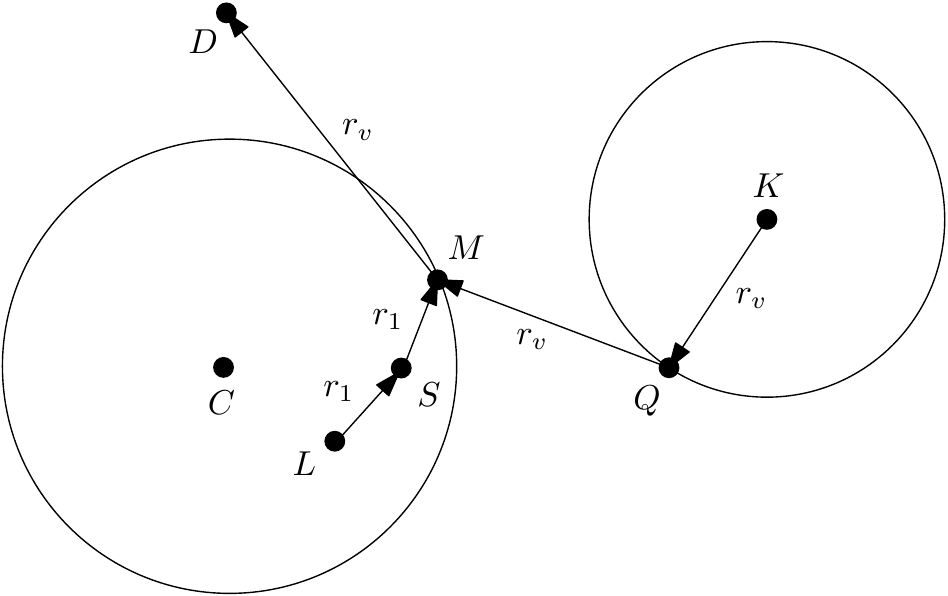}
\end{center}
\caption{%
    Trajectories of the robots for message delivery from $S$ to $D$. %
    Robot $r_v$ starts at the point $K$ and robot $r_1$ at the the point $L$. %
    Robot $r_1$ arrives at the source $S$ before $r_v$ does and meets robot %
    $r_v$ at $M$ which then delivers the message to $M$. %
}
\label{fig:circle3}
\end{wrapfigure}
robot $r_v$ follows a trajectory which first visits a point $Q$ at distance 
$v |LS|$ from its starting position, then continues along a straight-line 
trajectory to meet robot $r_1$ at a suitable point $M = (x_1,x_2)$, and finally 
delivers the message to the destination $D$. 
The main steps of the algorithm are as follows. 
\begin{enumerate}
\item
If $\frac{|KS|}{v} \leq |LS|$, then $r_v$ reaches $S$ before $r_1$ and 
$r_v$ should complete the delivery on its own.
\item
Otherwise, if $|LS| + |SD| \leq |KD|$, then $r_1$ can deliver the message on 
its own before $r_v$ can even reach the destination.
\item \label{item3}
Otherwise $r_1$ reaches $S$ in time $|LS|$ and, at the same time, robot $r_v$ 
goes to a specially selected point $Q = (q_1, q_2)$ which lies on the circle 
centered at $K$ with radius $v |LS|$.
\item \label{item4}
Robot $r_v$ meets robot $r_1$ at a point $M = (x_1, x_2)$ determined by the 
locus of points which are equal time away from $Q$ and $S$
(by Lemma~\ref{lm:apollonius}, this is the circle with center $C$ and radius 
$R$ as given in Equation~\eqref{radius}). Robot $r_1 $ passes message to $r_v$ 
which delivers it to $D$.
\end{enumerate}

Observe that by Lemma~\ref{lm:optimal_m}, $K$, $Q$, and $M$ must be collinear. 
We can then generalize the result of section~\ref{sec:optimal_two_robots}
using the following lemma. Recall that the center of similitude (also known as homothetic center) is a point from which at least two geometrically similar figures can be seen as a dilation or contraction of one another  (see \cite{yiu2001introduction}[Section 1.1.2]). 

\begin{lemma}
    \label{lm:similitude}
    Let $C$ be the center of the Apollonius circle between $r_1$ and $r_v$ when $r_1$ is at $S$ and $r_v$ is at $K$. 
    Then, $S$ is the center of similitude of the circles $\mathcal{C}(K, v|LS|)$ 
    and $\mathcal{C}(C, v|LS|/(v^2-1))$.
Consider any point $Q$ in the circumference of   $\mathcal{C}(K, v|LS|)$. Let $\beta$ be the angle $\angle(SKQ)$, then   
    $C_\beta = (\frac{v|LS|}{v^2-1} \cos \beta, %
        \frac{v|LS|}{v^2-1} \sin\beta) + C$ 
    is the center of the Apollonius circle of $S$.
% and 
  %  $Q = (-v |LS| \cos\beta, -v|LS| \sin\beta) + K$.
\end{lemma}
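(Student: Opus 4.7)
The plan is to derive both parts of the lemma directly from Lemma~\ref{lm:apollonius}. The key observation is that the formula $C = S + (S-K)/(v^2-1)$ is linear (in fact affine) in the fast robot's starting position, so the map sending the fast robot's position to the associated Apollonius center is a homothety centered at $S$. Once this homothety is identified explicitly, both the center-of-similitude claim and the formula for $C_\beta$ fall out with a short calculation.

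First I would rewrite the center formula as $C - S = -\frac{1}{v^2-1}(K - S)$, which exhibits the homothety $h$ centered at $S$ with ratio $-1/(v^2-1)$ sending $K$ to $C$. Since a homothety of ratio $\rho$ scales every distance from its center by $|\rho|$, $h$ sends $\mathcal{C}(K,v|LS|)$ to a circle centered at $h(K) = C$ with radius $v|LS|/(v^2-1)$, that is, to $\mathcal{C}(C,v|LS|/(v^2-1))$. Because the ratio is negative, the fixed point $S$ of $h$ is precisely the internal center of similitude of these two circles, proving the first assertion.

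For the second assertion, let $Q$ be any point on $\mathcal{C}(K,v|LS|)$ and apply Lemma~\ref{lm:apollonius} with the fast robot placed at $Q$ (and speed $v$) and the slow robot placed at $S$. The lemma gives that the associated Apollonius circle has center $S + (S-Q)/(v^2-1) = h(Q)$, which therefore lies on $\mathcal{C}(C,v|LS|/(v^2-1))$. This establishes the bijective correspondence $Q \leftrightarrow C_Q$ between points of the two circles via $h$. Writing $Q = K + v|LS|\,\hat{u}$ for the unit vector $\hat{u}$ making angle $\beta = \angle SKQ$ with $\vec{KS}$ (in a coordinate system aligned with the figure), a direct substitution gives $h(Q) - C = -\frac{v|LS|}{v^2-1}\hat{u}$, which is a vector of length $v|LS|/(v^2-1)$ whose direction is determined by $\beta$; choosing the orientation of axes as in the statement turns this into $C_\beta = C + \frac{v|LS|}{v^2-1}(\cos\beta,\sin\beta)$.

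The only mild obstacle is the bookkeeping around the negative homothety ratio: because $h$ reverses direction, the angular parameter on the target circle is antipodal to the angular parameter on the source circle, so one must be careful about how $\beta$ is oriented in order to match the formula as written. This is purely a matter of choosing coordinates consistently with the convention used in the surrounding figures and carries no additional geometric content beyond the homothety computation above.
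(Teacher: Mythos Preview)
Your argument is correct and is essentially the same as the paper's: the paper shows that $|SC_\beta|/|SQ|$ is the constant $1/(v^2-1)$ and then invokes the similarity $\triangle(SCC_\beta)\sim\triangle(SKQ)$, which is exactly your homothety $h$ of ratio $-1/(v^2-1)$ centered at $S$ expressed synthetically. Your phrasing via an explicit affine map is a bit cleaner and makes the internal-versus-external similitude distinction and the angle-orientation caveat for $C_\beta$ more transparent, but the underlying idea is identical.
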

\begin{inlineproof}
Refer to Figure~\ref{fig:similitude}.
    Consider any point $Q$ at the circumference of the circle 
    $\mathcal{C}(K, v|LS|)$.  
    Let $C_\beta$ be the center of the Apollonius circle between $r_1$ and $r_v$ when $r_1$ is at $S$ and $r_v$ is at $Q$. 
    Therefore, $|CQ| =  v^2 |SQ| / (v^2 - 1)$ and 
    $|SC_\beta| = |SQ| / (v^2 -1)$.
    Observe that the ratio $|Q|/|(SC_\beta| = v^2 - 1$ is always constant. 
    \begin{figure}[!ht]
        \centering 
        \includegraphics[scale=.8]{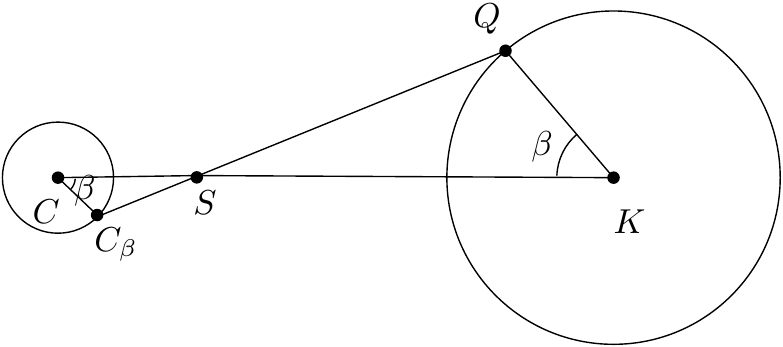}
        \caption{Similitude center.}
        \label{fig:similitude}
    \end{figure}
    Therefore, $C_\beta$ defines a circle $\mathcal{C}(C_\beta, v|LS|/(v^2-1))$ 
    where $S$ is the center of similitude. 
    The lemma follows since the triangles $\triangle(CSC_\beta)$ and 
    $\triangle(SKQ)$ are similar.
    \qed
\end{inlineproof}

Consider two points $C_\beta$ and $Q$ as described in Lemma~\ref{lm:similitude}, 
for some $\beta$. 
We can now use Theorem~\ref{thm:source} to characterize the solution.
However, this approach does not lead to a closed-form solution.
Instead, in the following lemma, we present another approach using optimization 
which does.

\begin{lemmarep}
    \label{lm:trajectory1}
    Let $a = |LS|$. 
    Then the optimal trajectory is obtained by a point $M = (x_1,x_2)$ which 
    minimizes the objective function 
    \begin{align}
        \begin{split}\label{eq:int0} 
    &  \sqrt{(k_1 - x_1)^2 + (k_2 - x_2)^2} 
            + \sqrt{(x_1 - d_1)^2 + (x_2 - d_2)^2}
        \end{split}
    \end{align}
    subject to the condition 
    \begin{align} 
        \label{eq:par3} 
        \left( 
                \frac{(x_1-k_1)^2 + (x_2-k_2)^2}{2a v^2}  
                - \frac{(x_1-s_1)^2 + (x_2-s_2)^2}{2 a} - \frac {a}{2} 
            \right)^2 
        = (x_1-s_1)^2 + (x_2-s_2)^2.
    \end{align}
\end{lemmarep}
\begin{appendixproof}
    Recall points $K, Q, M$ are collinear, the handover point  point $M$ must 
    lie at the intersection of two circles as depicted in 
    Figure~\ref{fig:circle4}.
    \begin{figure}[!htb]
        \centering
        \includegraphics[width=0.4\textwidth]{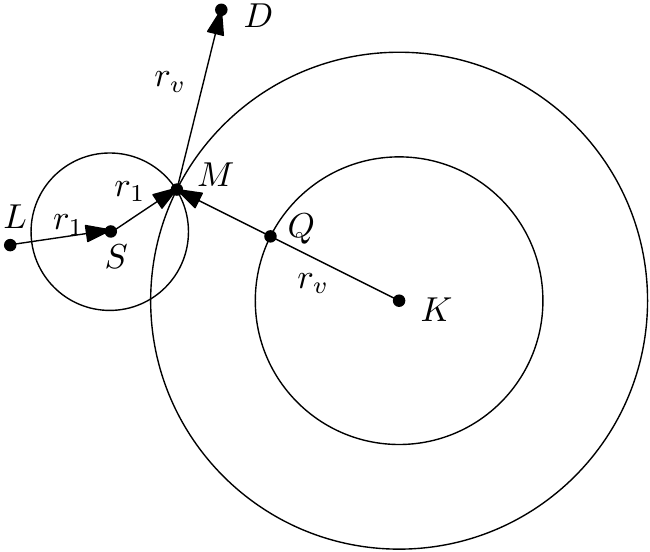}
        \caption{% 
            The length of the trajectory of $r_v$ is minimized when the meeting% 
            point $M$ is chosen so that the points $K, Q, M$ lie on a straight% 
            line. %
        }
        \label{fig:circle4}
    \end{figure}
    This can be expressed by the fact that $M=(x_1,x_2)$ satisfies the two 
    equations
    \begin{align}
        &\label{eq:par1} (x_1-k_1)^2 + (x_2-k_2)^2 = v^2 (|LS| + t)^2 \\
        &\label{eq:par2} (x_1-s_1)^2 + (x_2-s_2)^2 = t^2 .
    \end{align}

    In turn, this gives a system of quadratic equations parametrized with 
    respect to time $t$. 
    We can rewrite Equation~\eqref{eq:par1} as 
    $\frac{(x_1-k_1)^2 + (x_2-k_2)^2}{v^2} = (|LS| + t)^2$
    and subtracting both sides of the last Equation from 
    Equation~\eqref{eq:par2} we derive the Equation
    $$
    \frac{(x_1-k_1)^2 + (x_2-k_2)^2}{v^2}
        - (x_1-s_1)^2 - (x_2-s_2)^2 = |LS| (|LS| + 2t) .
    $$

    By collecting similar terms, using Equation~\eqref{eq:par2}, and simplifying 
    we derive the following Equation
    \begin{align*} 
    \left( 
            \frac{(x_1-k_1)^2 + (x_2-k_2)^2}{2|LS| v^2}  
            - \frac{(x_1-s_1)^2 + (x_2-s_2)^2}{2 |LS|} - \frac {|LS|}{2} 
        \right)^2
    = (x_1-s_1)^2 + (x_2-s_2)^2.
    \end{align*}
    This is exactly Equation~\eqref{eq:par3}.
    \qed
\end{appendixproof}

The resulting optimization problem has two unknowns $x_1, x_2$ in the objective 
function~\eqref{eq:int0} and must satisfy the condition of 
Equation~\eqref{eq:par3}.
It can be used to substitute variables and express the final optimization 
function described in Formula~\eqref{eq:int0} using only a single variable, 
say $x_1$, which can then be minimized using standard analytical methods. 
This is easily seen since Equation~\eqref{eq:par3} is of degree $4$ in the 
variable $x_2$ (as well as in the variable $x_1$, for that matter).
Therefore a closed form expression of the variable $x_2$ in terms of the 
variable $x_1$ and the known parameters $S,D$ is easily derived. 

There are two symmetries in Equation~\eqref{eq:par3} which simplify the 
objective function and make the calculation of the solution easier. 
They are easily revealed with simple geometric transformations.

For the first symmetry, consider a rotation of the axis and a translation of 
the entire configuration of points so that $K$ and $S$ lie on the horizontal 
axis, i.e., $(k_1, k_2) = (0,0)$ and $(s_1,s_2) =(s_1,0)$. 
Then Equation~\eqref{eq:par3} is transformed to the equation
\begin{align} 
    \label{eq:par3a}  
    \left( \frac{x_1^2 + x_2^2}{2a v^2} - \frac{(x_1-s_1)^2 + x_2^2}{2a} 
        - \frac a2 \right)^2 = (x_1-s_1)^2 + x_2^2
\end{align}
The resulting symmetry is along the horizontal $x_1$-axis in 
Equation~\eqref{eq:par3a}. 
Namely, if $(x_1,x_2)$ is a solution so is $(x_1, - x_2)$. 
If we consider Equation~\eqref{eq:par3a} in the unknown $x_2$ we see that it is 
of degree $4$, but which is also a quadratic in $x_2^2$.  
Therefore $x_2$ can be easily expressed as a function of $x_1$ using the 
formula for the roots of the quadratic equation.
The second symmetry is obtained in a similar manner. 
If $(x_1,x_2)$ is a solution so is $(-x_1,  x_2)$. One considers a 
rotation of the axis and a translation of the entire configuration of points 
so that $K$ and $S$ lie on the vertical axis, i.e., $(k_1, k_2) = (0,0)$ 
and $(s_1,s_2) =(0,s_2)$. Details can be completed as above.
To sum up we have the following Algorithm~\ref{alg:two_robot_opt} which 
determines the handover point which yields the optimal trajectory.
\begin{algorithm}[H]
    \caption{Optimal Two-Robot Algorithm}\label{alg:two_robot_opt}
    \begin{algorithmic}[1]
        \If{$\frac{|KS|}{v} \leq |LS|$}\label{line:gen_case_fast}
            \State \textbf{return} $\frac{|KS| + |KD|}{v}$
        \ElsIf {$\frac{|KD|}{v} \geq |SD|$}\label{line:gen_case_slow}
            \State \textbf{return} $|SD|$
        \Else
            \State {%
                $M^* \gets M$ which minimizes %
                Formula~\eqref{eq:int0}~\label{line:opt_m}%
            }
            \State \textbf{return} $\frac{|KM^*|+|M^* D|}{v}$
        \EndIf
    \end{algorithmic}
\end{algorithm}

\begin{theoremrep}       
    \label{thm:optimal_trajectory_2}
    Algorithm~\ref{alg:two_robot_opt} returns the optimal delivery time for two 
    robots $r_1$ and $r_v$ with speeds $1$ and $v$, respectively, and can be 
    implemented using a constant number of operations (including trigonometric functions).
\end{theoremrep}
\begin{appendixproof}
    The proof follows from the previous discussion. 
    Indeed, without loss of generality we may consider only the case where the 
    slow robot $r_1$ reaches $S$ first. 
    As depicted in Figure~\ref{fig:circle4} there are two competing 
    trajectories.
    Given that the slow robot can arrive first at $S$, either the robots follow 
    the algorithm and the slow robot $r_1$ meets the faster robot $r_v$ at the 
    meeting point $M$ to handover the message to $r_v$ which then delivers it to 
    $D$ or the faster robot $r_v$ gets the message at $S$ and delivers it to 
    $D$ without cooperating with the other robot. 

    In the former case the delivery time will be $\frac{|KM|}v + \frac{|MD|}v$ 
    while in the latter case the delivery time will be $\frac{|KS| + |SD|}v$. 
    However this is easy to prove since the point $M$ must lie inside the 
    triangle $\triangle (KSD) $ as depicted in Figure~\ref{fig:circle4}, i.e.,
    $|KM| + |MD| < |KS| + |SD|$.
    All other lines in the algorithm clearly require a constant number of operations. 
    By the previous discussion, a closed form solution for the optimization 
    required in line~\ref{line:opt_m} exists~\footnote{
        \href{https://www.wolframcloud.com/obj/oscar.moralesponce/Published/Pony_Express_Theorem_2}{Link to Mathematica solution for Theorem~\ref{thm:optimal_trajectory_2}}
    }. 
    \qed
\end{appendixproof}

\section{Offline $\sqrt{2}$ Approximation for Multiple Robots}
\label{sec:sqrt2}

\begin{wrapfigure}{r}{0.5\textwidth}
    \begin{center}
        \includegraphics[width=0.15\textwidth]{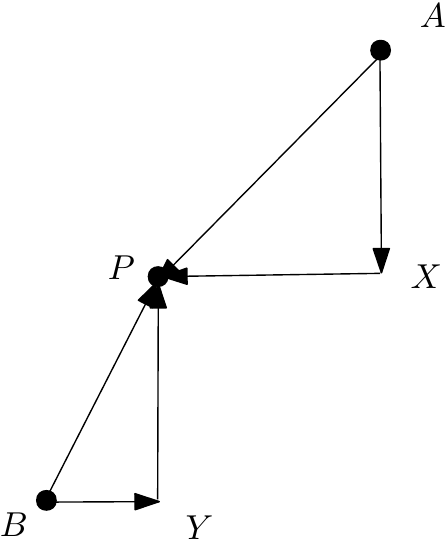}
    \end{center}
    \caption{%
        Replacing Euclidean movements of the robots with rectilinear movements.%
    }
    \label{fig:rect1}
\end{wrapfigure}
\vspace{-0.3cm}
In principle, the equations derived in the previous section can be generalized 
to solve the problem optimally for $n$ robots. 
% This would result in a set of $n$ equations and $n$ unknowns. 
Unfortunately, we are not able to solve the resulting set of equations. 
We do not speculate on the complexity of the general problem here. 
Instead, in this section we provide a $\sqrt{2}$-approximation algorithm, 
%that runs in $O(n \log n)$ time.
The robots know the location of the source $S$ and destination $D$ but also all 
robots know the initial locations and speeds of all other robots. 
The basic idea of our proof is contained in the following observation depicted 
in Figure~\ref{fig:rect1}.
Suppose that during the execution of an optimal ``Euclidean'' algorithm (i.e., 
optimal in the sense of the Euclidean distance) two robots placed at $A$ and 
$B$, follow the straight-line trajectories $A \to P$ and $B \to P$, 
respectively, and meet at the point $P$. 

Now we replace the Euclidean trajectories $A \to P$ and $B \to P$ with the 
rectilinear trajectories $A \to X \to P$ and $B \to Y \to P$, respectively. 
Elementary geometry implies that 
\begin{equation}
    \label{eq:rect1}
    |AX| + |XP| \leq \sqrt{2} |AP|
    \mbox{ and } 
    |BY| + |YP| \leq \sqrt{2} |BP| .
\end{equation}
This observation leads to the following lemma.
\begin{lemma}
    \label{lm:sqrt2}
    Consider the pony express problem for $n$ robots, a source $S$ and a 
    destination $D$ in the plane. Then
    $Opt_{Rect} \leq \sqrt{2} \cdot Opt_{Eucl}$, where $Opt_{Rect} , Opt_{Eucl}$ 
    are the delivery costs of the optimal trajectories of the pony express 
    problem for delivering from a source to a destination measured in the 
    rectilinear and Euclidean metrics, respectively.  
\end{lemma}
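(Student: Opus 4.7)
My plan is to take any optimal Euclidean trajectory and build from it a feasible rectilinear schedule whose makespan is at most $\sqrt{2}$ times the Euclidean makespan.

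First, I would argue that in the optimal Euclidean solution the message traverses a sequence of handover points $S = P_0, P_1, \ldots, P_k = D$ carried successively by robots $r_{i_1}, \ldots, r_{i_k}$, and that (by the same argument as Lemma~\ref{lm:optimal_m}, suitably generalized) each carrier $r_{i_j}$ follows the piecewise-linear trajectory $A_{i_j} \to P_{j-1} \to P_j$ at full speed, with no waiting. The Euclidean delivery time can then be written as $Opt_{Eucl} = \sum_{j=1}^{k} |P_{j-1}P_j|/v_{i_j}$, subject to the feasibility constraints $|A_{i_j}P_{j-1}|/v_{i_j} \leq \sum_{\ell<j} |P_{\ell-1}P_\ell|/v_{i_\ell}$ that guarantee each carrier reaches its pickup point in time.

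Next, I would construct the rectilinear schedule by replacing every Euclidean segment $A \to P$ in each robot's trajectory with an axis-aligned L-shape $A \to X \to P$, where $X = (a_1, p_2)$ or $X = (p_1, a_2)$ (pick either). The two legs have total length $|a_1 - p_1| + |a_2 - p_2| \leq \sqrt{2}\sqrt{(a_1-p_1)^2 + (a_2-p_2)^2} = \sqrt{2}|AP|$ by the elementary inequality used in Equation~\eqref{eq:rect1}. Applying this to every segment (both the travel-to-pickup leg $A_{i_j} \to P_{j-1}$ and the carrying leg $P_{j-1} \to P_j$), the rectilinear traversal time of each leg is at most $\sqrt{2}$ times the Euclidean traversal time of the same leg.

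The remaining step is to check feasibility of the rectilinear schedule and sum the resulting times. By the above, the rectilinear time for carrier $r_{i_j}$ to reach $P_{j-1}$ from $A_{i_j}$ is at most $\sqrt{2}|A_{i_j}P_{j-1}|/v_{i_j}$, which by the Euclidean feasibility constraint is at most $\sqrt{2} \sum_{\ell<j} |P_{\ell-1}P_\ell|/v_{i_\ell}$, exactly the rectilinear time at which the message arrives at $P_{j-1}$ when we scale each carried leg by $\sqrt{2}$. Hence every pickup is on time. Summing the rectilinear carrying legs gives
\begin{equation*}
Opt_{Rect} \leq \sum_{j=1}^{k} \frac{\sqrt{2}\,|P_{j-1}P_j|}{v_{i_j}} = \sqrt{2}\cdot Opt_{Eucl},
\end{equation*}
which is the claim.

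The main obstacle I expect is the bookkeeping step of showing that the rectilinear schedule remains feasible: one must verify that scaling both the travel-to-pickup legs and the carrying legs by the same factor preserves the ordering of arrival times at each handover. Fortunately, scaling all times by the common factor $\sqrt{2}$ preserves all inequalities, so once the structural description of the Euclidean optimum is in hand, the remainder is routine.
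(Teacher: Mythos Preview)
Your approach is essentially the same as the paper's: take the optimal Euclidean schedule, replace each straight-line segment by an axis-aligned L-shape using Inequality~\eqref{eq:rect1}, and observe that the resulting rectilinear schedule (with waiting, or equivalently with each leg padded to exactly $\sqrt{2}$ times its Euclidean duration) has makespan at most $\sqrt{2}\cdot Opt_{Eucl}$. One minor omission: your expression $Opt_{Eucl} = \sum_j |P_{j-1}P_j|/v_{i_j}$ drops the time $|A_{i_1}S|/v_{i_1}$ for the first carrier to reach $S$, but once that term is included the same scaling argument goes through unchanged.
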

\begin{proof}
    Consider an optimal Euclidean algorithm ${\cal A}_{Eucl}$ which ensures 
    the delivery time is exactly $Opt_{Eucl}$, i.e., 
    $T({\cal A}_{Eucl}) = Opt_{Eucl}$. 
    Now use the idea discussed in Figure~\ref{fig:rect1} to replace the 
    Euclidean trajectory of algorithm ${\cal A}_{Eucl}$ with a rectilinear 
    trajectory thus giving rise to a rectilinear algorithm ${\cal A}_{Rect}$. 
    Note that in this rectilinear simulation of the optimal Euclidean solution, 
    robots may not arrive at the endpoints at the same time. 
    The robot that arrives first, should simply wait at the meeting point until 
    the second robot arrives.
    The meeting time is thus determined by the last arrival of the two robots.
    By definition, the time it takes the rectilinear algorithm ${\cal A}_{Rect}$ 
    to deliver the message is at least $Opt_{Rect}$, i.e., 
    $T({\cal A}_{Rect}) \geq Opt_{Rect}$. 
    From Inequality~\eqref{eq:rect1} we have that 
    $T({\cal A}_{Rect}) \leq \sqrt{2} \cdot T({\cal A}_{Eucl})$. 
    Therefore we conclude that
    $
        Opt_{Rect} \leq T({\cal A}_{Rect}) 
            \leq \sqrt{2} \cdot T({\cal A}_{Eucl}) 
            = \sqrt{2} \cdot Opt_{Eucl}
    $
    \qed
\end{proof}

Consider $n$ robots in the plane with starting positions $p_1, \ldots, p_n$. Without loss of generality assume the slowest robot has speed $1$. Further, let the source of a message be located at a point $S$ and the destination at a point $D$ and assume, without loss of generality, that the line segment $SD$ is horizontal. Enclose the points $S, D$ and $p_1, \ldots, p_n$ in a $\Delta \times \Delta$ square with sides parallel to the $x-, y-$axis, where $\Delta $ is a positive real proportional to the diameter of the set $\{ S, D\} \cup \{ p_1, \ldots, p_n \}$. For $\epsilon >0$ arbitrarily small, partition the $\Delta \times \Delta$ square with parallel vertical and horizontal lines with consecutive distances $\epsilon >0$, respectively, so as to form a $\frac{\Delta}{\epsilon} \times \frac{\Delta}{\epsilon}$ grid. Without loss of generality we may assume that $S$ and $D$ are vertices in this grid graph (This is easy to accomplish by choosing $\epsilon$ to be an integral fraction of the distance $|SD|$ between $S$ and $D$.)  Clearly, this forms a grid graph with $\left( \frac{\Delta}{\epsilon} \right)^2$ vertices so that $S, D$ are also vertices and $\left( \frac{\Delta}{\epsilon} \right)^2$ edges. Now consider the following algorithm.

%\begin{figure}[!htb]
%\begin{center}
%\includegraphics[width=10cm]{FIG/th1.pdf}
%\end{center}
%\caption{An arrangement of seats; moviegoers may enter only from the left
%and the numbering of the seats is $1$ to $n$ from left to right.}
%\label{fig:th1}
%\end{figure}

\vspace{-0.3cm}
\begin{algorithm}[H]
\caption{Grid Algorithm ($S$ source, $D$ destination, $\epsilon > 0$)}\label{alg:grid1}
\begin{algorithmic}[1]
\State {In phase 1, each robot moves from its starting position $p_i$ to one vertex $p_i'$ of the $\epsilon \times \epsilon$ square in which it is contained; all the robots synchronize so that they can start the next phase at the same time by waiting for time at most $\epsilon$;}
\State {In phase 2, run the optimal algorithm in \cite{carvalho2019efficient} on the $\frac{\Delta}{\epsilon} \times \frac{\Delta}{\epsilon}$ grid to provide trajectories for the $n$ robots with starting positions $p_1', \ldots, p_n'$ in optimal time in order to deliver the message from the course $S$ to the destination $D$; when a robot meets another robot for a message handover the robot that arrives first waits for the arrival of the second robot;}
\end{algorithmic}
\end{algorithm}

\begin{theorem}
\label{thm:grid}
For any $\epsilon' > 0$ arbitrarily small, there exists an algorithm that finds trajectories for $n$ robots to deliver the message from the source $S$ to the destination $D$ in time $O\left( n^3 \left( \frac {\Delta}{\epsilon '} \right)^2 \log \left(n \frac {\Delta}{\epsilon '} \right)  \right) $ whose delivery time is at most $\sqrt 2$ multiplied by the delivery time of the optimal Euclidean algorithm plus the additional additive overhead $\epsilon'$, where $\Delta$ is the diameter of the point set
. % containing $S, D$, and the starting positions of the robots. 
%    There is an $O(n \log n)$ time algorithm that finds a set of trajectories  for $n$ robots that delivers the message in at most $\sqrt 2$ times optimal (in the Euclidean metric).
\end{theorem}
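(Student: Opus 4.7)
The plan is to instantiate Algorithm~\ref{alg:grid1} with a carefully chosen grid spacing $\epsilon$ depending on $\epsilon'$ and $n$, and to bound the total delivery time by the sum of three contributions: the Phase~1 synchronization overhead, the optimal rectilinear delivery time (related to $Opt_{Eucl}$ via Lemma~\ref{lm:sqrt2}), and a ``snapping'' error incurred because handovers in Phase~2 are restricted to grid vertices.

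First I would analyze Phase~1. Since each starting position $p_i$ lies in an $\epsilon \times \epsilon$ cell, the distance to a closest grid vertex $p_i'$ is at most $\epsilon\sqrt{2}/2$, and since the slowest robot has speed $1$ (and all others are at least as fast), every robot can reach its chosen $p_i'$ within time at most $\epsilon$. Hence the synchronization overhead before Phase~2 begins is at most $\epsilon$.

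Next I would bound the Phase~2 delivery time. Take an optimal Euclidean schedule with delivery time $Opt_{Eucl}$. By Lemma~\ref{lm:sqrt2}, converting every straight-line leg into a two-leg rectilinear leg yields a feasible rectilinear schedule of cost at most $\sqrt{2}\cdot Opt_{Eucl}$ in the continuous plane; this schedule uses at most $n-1$ handover points on the delivery chain from $S$ to $D$. I would then round each starting position, each handover point, and $D$ itself to a nearest grid vertex. Each such rounding perturbs a meeting location by at most $\epsilon\sqrt{2}/2$, which, bounded again by the slowest speed $1$, adds $O(\epsilon)$ to the time needed by each of the two robots incident to that handover. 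With $O(n)$ snaps total, the resulting grid-constrained schedule (which is feasible for the graph algorithm of~\cite{carvalho2019efficient} invoked in Phase~2) delivers in time at most $\sqrt{2}\cdot Opt_{Eucl} + c\,n\epsilon$ for some absolute constant $c$. Adding the Phase~1 overhead yields total delivery time at most $\sqrt{2}\cdot Opt_{Eucl} + (cn+1)\epsilon$. Choosing $\epsilon := \epsilon'/(cn+1)$ makes the additive overhead at most $\epsilon'$, as required.

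For the running time, the grid has $N = (\Delta/\epsilon)^2$ vertices and $\Theta(N)$ edges, so the algorithm of~\cite{carvalho2019efficient} with $k = n$ agents runs in $O(nN\log N)$. Substituting $\epsilon = \Theta(\epsilon'/n)$ gives $N = \Theta\!\bigl(n^2(\Delta/\epsilon')^2\bigr)$, and the total time becomes $O\!\bigl(n^3(\Delta/\epsilon')^2 \log(n\Delta/\epsilon')\bigr)$, matching the claimed bound. The main obstacle will be making the snapping step rigorous: one must argue that replacing continuous-plane meeting points by nearby grid vertices really does produce a schedule executable by the grid algorithm, and that the per-snap overhead only adds (rather than compounds) along the handover chain. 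The key observation is that each snap independently lengthens the rectilinear distance traversed by the two involved robots by $O(\epsilon)$, and since these robots are scheduled sequentially along the chain, the extra waiting times simply accumulate linearly in the number of snaps, which is $O(n)$.
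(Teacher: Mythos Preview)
Your proposal is correct and follows essentially the same approach as the paper: set $\epsilon=\Theta(\epsilon'/n)$, bound the Phase~1 synchronization by $\epsilon$, and bound the Phase~2 grid optimum by snapping the handover points of an optimal rectilinear schedule (obtained via Lemma~\ref{lm:sqrt2}) to grid vertices, incurring $O(\epsilon)$ extra time per handover and hence $O(n\epsilon)$ in total. The paper organizes the bound via the intermediate claim $Grid(p_1',\ldots,p_n')\le Opt_{Rect}+(n-1)\epsilon$ and then applies Lemma~\ref{lm:sqrt2}, whereas you fold these two steps together, but the argument and the running-time calculation are the same.
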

\begin{proof}
    Let $\epsilon = |SD| / \left\lceil \frac{n}{\epsilon^\prime} |SD| \right\rceil \leq \frac{\epsilon^\prime}{n}$ and run Algorithm~\ref{alg:grid1}. 
    Let ${\cal A}(p_1, \ldots, p_n)$ be the time the algorithm takes to deliver the message and let $Grid(p_1^\prime, \ldots, p_n^\prime)$ be the time for step 2 in the algorithm (the optimal delivery time for the given grid with starting positions $p_1^\prime, \ldots, p_n^\prime$). 
    Then, let $Opt_{Rect}$ and $Opt_{Eucl}$ be the optimal delivery times for the rectilinear and Euclidean metrics respectively.
    First, observe 
    \begin{equation}\label{obs1}
        {\cal A}(p_1, \ldots, p_n) \leq Grid(p_1^\prime, \ldots, p_n^\prime) + \epsilon
    \end{equation}
    The result will follow from Lemma~\ref{lm:sqrt2} and the following claim:

\begin{claim}
%    \textbf{Claim 1: } 
$Grid(p_1^\prime, \ldots, p_n^\prime) \leq Opt_{Rect} + (n-1)\epsilon$.
\end{claim}     
    \begin{proof}(of Claim)
        Without loss of generality, assume the robots involved (in order) are robots $1$ through $k$.
        Let $q_1, q_2, \ldots, q_k$ be the handover points in the optimal rectilinear algorithm (where $q_1=S$). 
        Let $q_i^\prime$ be the nearest point to $q_i$ on the grid.
        Let $t_0$ be the time it takes for the first robot to arrive at the source and, $t_i$ (for $i\in[1,k]$) be the time that robot $i$ holds the message.
    
        Consider the algorithm where robots emulate the rectilinear algorithm on the grid by meeting 
        at $q_i^\prime$ instead of $q_i$, for each handover.
     Note that robots may not arrive at the endpoints at the same time. However, the algorithm is offline the robot that arrives first, should simply wait at the meeting point until the second robot arrives. 
        
        The time each robot holds the message then is at most $t_i + \epsilon$ and the total time to deliver the message is 
        \begin{align*}
            \sum_{i=1}^{k} t_i + \epsilon = Opt_{Rect} + (k-1)\epsilon \leq Opt_{Rect} + (n-1) \epsilon.
        \end{align*}
This proves the claim.        
    \end{proof}

    % Observe in the grid algorithm, each robot $i$ can carry the message from $q_i^\prime$ to $q_{i+1}^\prime$ in at most $t_i + \epsilon$.

%\noindent
%    \textbf{Claim 2: } $Opt_{Rect} \leq \sqrt 2 \cdot Opt_{Eucl}$.
%    \begin{proof}(of Claim 2)
%        (Insert proof from lemma 3.1)
%    \end{proof}

    Using inequality~\eqref{obs1}, the above Claim and Lemma~\ref{lm:sqrt2} we arrive at the inequality
    $
        {\cal A}(p_1, \ldots, p_n) \leq \sqrt 2  \cdot Opt_{Eucl} + n\epsilon.
    $
    Finally, by selecting $\epsilon$ as above, ${\cal A}(p_1', \ldots, p_n') \leq \sqrt{2} \cdot Opt_{Eucl} + \epsilon^\prime$ and the complexity of the algorithm, by \cite{carvalho2019efficient}, is $O\left( n^3 \left( \frac {\Delta}{\epsilon '} \right)^2 \log \left(n \frac {\Delta}{\epsilon '} \right)  \right) $. This completes the proof of Theorem~\ref{thm:grid}.
    \qed
\end{proof}

\section{Online Upper Bounds}
\label{sec:Online Upper Bounds}

In this section we discuss online algorithms. In Subsection~\ref{sec:cr2D} we 
give an online algorithm with competitive ratio  $\frac{1}{7}(5+4 \sqrt{2})$ 
for two robots with knowledge only of the source $S$ and destination $D$. 
We show this bound is tight for the given algorithm. 
In Subsection~\ref{Multi Robot Algorithm} we show that the same algorithm when 
generalized to $n$ robots has competitive ratio at most 2. 
Further, we show that for any $n>2$, the competitive ratio of our algorithm is 
at least $2 - \frac{2}{2^{n} - 1}$.

\subsection{% 
    Two Robot Algorithm with Competitive Ratio $\frac{1}{7}(5+4 \sqrt{2})$ %
}\label{sec:cr2D}

Consider the following Algorithm~\ref{alg:approx1} for multiple robots.

\vspace{-0.3cm}
\begin{algorithm}[H]
    \caption{Online Algorithm ($S$ source, $D$ destination)}\label{alg:approx1}
    \begin{algorithmic}[1]
        \State {Move toward $S$} 
        \State {Acquire the message at $S$}
        \State {Move toward and deliver the message to $D$}
    \end{algorithmic}
\end{algorithm}
\vspace{-0.3cm}

Observe that in this algorithm the robots act independently. 
In particular no attempt is made to co-ordinate the action of the robots and if 
two robots meet they ignore each other. 
This is not required in order to achieve the upper bounds below. 
For our lower bounds on this algorithm, we assume that the robots do not 
interact even if it may improve the time to complete the task. 

\begin{theorem}\label{thm:approx-new2D}
    For the case of two robots, Algorithm~\ref{alg:approx1} delivers the message 
    from the source $S$ to the destination $D$ in at most 
    $\frac{1}{7}(5+4 \sqrt{2})$
    times the optimal offline time.
\end{theorem}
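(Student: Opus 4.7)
The plan is to reduce to Case~3 of the case analysis at the start of Section~\ref{sec:offline_two_robots}: in Cases~1 and~2 the optimal offline solution has one robot deliver on its own, which is exactly what the algorithm does for that robot, so the ratio is $1$. Let $T$ denote the optimal delivery time and $M$ the optimal handover point (whose existence is guaranteed by Lemma~\ref{lm:optimal_m}). Then
\[
|LS|+|SM|=\frac{|KM|}{v}\qquad\text{and}\qquad|KM|+|MD|=vT,
\]
so that $|LS|+|SM|+|MD|/v=T$. Setting $\lambda=|LS|/T$, $\sigma=|SM|/T$, $\mu=|MD|/T$, the feasible region becomes the simplex $\lambda+\sigma+\mu/v=1$ with $\lambda,\sigma,\mu\ge 0$.

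I would then bound each of the algorithm's two independent strategies using the triangle inequality. Applied to $\triangle SMD$ and $\triangle KMS$ respectively:
\begin{align*}
|LS|+|SD| &\;\le\; |LS|+|SM|+|MD| \;=\; T\Bigl(1+\tfrac{(v-1)\mu}{v}\Bigr),\\
\tfrac{|KS|+|SD|}{v} &\;\le\; \tfrac{|KM|+2|SM|+|MD|}{v} \;=\; T\Bigl(1+\tfrac{2\sigma}{v}\Bigr),
\end{align*}
so $\mathrm{ALG}\le T\bigl(1+\min(\tfrac{(v-1)\mu}{v},\tfrac{2\sigma}{v})\bigr)$.

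It remains to optimise this upper bound. For fixed $v>1$, the inner minimum is largest when its two arguments coincide, giving $\mu=2\sigma/(v-1)$; driving the remaining freedom against the boundary (which forces $\lambda=0$, i.e., $L=S$) yields $\sigma=\tfrac{v(v-1)}{v^2-v+2}$ and $\mathrm{ALG}/\mathrm{OPT}\le 1+\tfrac{2(v-1)}{v^2-v+2}$. Differentiating in $v$ produces the quadratic $v^2-2v-1=0$, whose positive root $v=1+\sqrt 2$ gives exactly $\tfrac{5+4\sqrt 2}{7}$ upon back-substitution. The main subtlety I anticipate is confirming that the two triangle-inequality bounds can be simultaneously tight: each requires a particular collinearity ($M$ on segment $KS$ and on segment $SD$), so together they force a one-dimensional configuration where $K,S,M,D$ lie on a line with $S$ and $K$ on opposite sides of $M$. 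Realising this configuration with $L=S$ also certifies tightness of the bound, which is exactly the tightness claim that the paper says follows.
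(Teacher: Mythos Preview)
Your argument is correct and follows essentially the same route as the paper: reduce to Case~3, bound each robot's solo delivery time via the triangle inequalities on $\triangle SMD$ and $\triangle KMS$, equate the two bounds, and optimise over $v$ to obtain $(v^2+v)/(v^2-v+2)$, maximised at $v=1+\sqrt2$. The only cosmetic difference is that you normalise by $T$ and work on the simplex $\lambda+\sigma+\mu/v=1$, whereas the paper keeps the distances $t,x,a,|MD|$ explicit and applies the inequality $|MD|\ge a-x$ to the denominator rather than $|SD|\le|SM|+|MD|$ to the numerator; these are the same triangle inequality used in opposite directions and lead to the identical bound. Your tightness remark (all points collinear with $L=S$) matches the paper's Example~\ref{ex:2}.
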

\begin{inlineproof} 
    Given an arbitrary instance of the problem, let $t^*$ be the time taken by 
    the optimal solution to deliver the message from $S$ to $D$. 
    Let $r_1, r_2$ be the robots involved in that optimal solution where 
    $v_1 = 1  \leq v_2 = v$. 
    Observe that if only one robot is involved then our online solution is 
    optimal. 
    Let $K$ be the starting point of robot $r_2$. 
    Let $M$ be the point in the optimal solution where $r_1$ hands the message 
    off to $r_2$. Let $t$ be the time when this happens. 
    Finally, let $x = |SM|$ and $a=|SD|$.

    We make the following observations: 
    \begin{enumerate}
        \item $\frac{|KM|}{v} = t$.
        \item $t^* = \frac{|KM|+|MD|}{v} = t + |MD|/v$.
        \item $r_1$ delivers the message in time $t+a-x$. (Recall that $t$ is 
            the time for $r_1$ to reach $M$, $x = |SM|$, $|SD| = a$ 
            and $v_1 = 1$.)
        \item $r_2$ delivers the message in time $\frac{|KS|+|SD|}{v}$ which is 
            less than $\frac{|KM| +|SM|+a}{v} = t + \frac{a+x}{v}$.
        \item $t \geq x$ and  $x \leq a$.
        \item $|MD| \geq a-x$.
    \end{enumerate}

    Let $c_2$ be the competitive ratio of our algorithm for our two robots. 
    We have 
    $
        c_2 = \min \left\{ 
            \frac {t+a-x}{t^*}, \frac{\frac{|KS|+|SD|}{v}}{t^*} 
        \right\} \leq 
        \min \left\{
            \frac{t+a-x}{t+|MD|/v},\frac{ t+\frac{a+x}{v}}{t+|MD|/v} 
        \right\}
    $ by observation 4.
    This is maximized when $t+a-x = t+\frac{a+x}{v}$ or when 
    $x = a \frac{v-1}{v+1}$. 
    In this case we have:
    \begin{align*}
        c_2 & \leq  \frac{t+a-x}{t+|MD|/v} \\
        & \leq \frac{t+a-x}{t + (a-x)/v}  \textrm{ by obs. 6 }\\
        & \leq \frac{x+a-x}{x + (a-x)/v} \textrm{ by obs. 5 }\\
        & = \frac{v^2+v}{v^2-v+2}.
    \end{align*}

    This is maximized when $v = 1+\sqrt{2}$ at which point 
    $c_2 \leq \frac{1}{7}(5+4 \sqrt{2}).$
    \qed
\end{inlineproof}

\begin{example} 
\label{ex:2}
Now we give a tight lower bound on the competitive ratio of 
Algorithm~\ref{alg:approx1} for two robots.
Consider the following example input.
One robot is placed at the source $S$ which is the point $(0,0)$ and has 
speed $\frac 1{1+\sqrt{2}}$. The destination $D$ is placed at the point $(1,0)$. 
The second robot has speed $1$ and is placed at the point $(\sqrt{2}, 0)$.  
The robots are initially placed at distance $\sqrt{2}$. 
In the optimal algorithm the robots meet in time 
$\frac{\sqrt{2}}{1+ \frac 1{1+\sqrt{2}}} = 1$ at the point 
$x = \frac 1{1+\sqrt{2}}$. 
The faster robot picks up the message at $x$ and delivers it to $D$ in 
additional time $1-x = 1 -  \frac 1{1+\sqrt{2}}$. 
Therefore the delivery time of the optimal algorithm is equal to
$1+ 1-x = 2 -  \frac 1{1+\sqrt{2}}$.
It follows that the competitive ratio satisfies
$
    c_2 \geq \frac{1+\sqrt{2}}{2 -  \frac 1{1+\sqrt{2}}} 
        = \frac{1}{7}(5+4 \sqrt{2}) \approx 1.522407\ldots.
$
\end{example}

\begin{remark}
    Note that in Example~\ref{ex:2} if we parametrize the speed of the slow 
    robot to $\frac 1{1+y}$, and place the fast robot at position $y >1$ then 
    similar calculations show that $c_2 \geq \frac{y^2+3y+2}{y^2+y+2}$. 
    Further, it is easy to see that the lower bound $\frac{y^2+3y+2}{y^2+y+2}$ 
    is maximized for $y = \sqrt{2}$.
\end{remark}

\subsection{Multi Robot Algorithm with Competitive Ratio $\leq 2$}
\label{Multi Robot Algorithm}
\begin{theorem}\label{thm:pony_online}
    Algorithm~\ref{alg:approx1} has competitive ratio at most $2$ for any $n>2$ 
    robots.
\end{theorem}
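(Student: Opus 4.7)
The plan is to exhibit a specific robot whose online delivery time is at most $2t^*$. The natural candidate is the robot $r$ that delivers the message to $D$ in the optimal solution. Let $M$ denote $r$'s pickup point in the optimal trajectory and $\sigma$ its pickup time, so that $t^* = \sigma + |MD|/v_r$ and $|p_r M| \le v_r \sigma$ (since $r$ must travel from $p_r$ to $M$ by time $\sigma$).

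Applying the triangle inequalities $|p_r S| \le |p_r M| + |MS|$ and $|SD| \le |MS| + |MD|$, the online delivery time of $r$ satisfies
\begin{equation*}
    \frac{|p_r S| + |SD|}{v_r} \;\le\; \frac{(|p_r M| + |MD|) + 2|MS|}{v_r} \;\le\; t^* + \frac{2|MS|}{v_r}.
\end{equation*}
Hence whenever $|MS| \le v_r t^*/2$, we immediately obtain $\mathrm{online}(r) \le 2t^*$, and the claim follows.

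The complementary case $|MS| > v_r t^*/2$ is the main obstacle. In this regime the message has already been advanced an appreciable distance from $S$ before $r$ picks it up, and I would switch to analyzing an earlier carrier --- most naturally the first robot $r_1$, whose online time is $\tau_1 + |SD|/v_1$ with $\tau_1 = |p_1 S|/v_1 \le t^*$. The point is that the segments of the optimal trajectory taking the message from $S$ to $M$ must cover distance at least $|MS|$ in time at most $\sigma - \tau_1 \le t^*$, so by averaging at least one earlier carrier has effective speed at least $|MS|/t^* > v_r/2$; combined with the optimal constraint that $r_1$ must reach $S$ in time $\tau_1$, a careful bookkeeping shows that $r_1$'s term $|SD|/v_1$ is bounded, yielding $\mathrm{online}(r_1) \le 2t^*$. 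Since the next subsection establishes a matching lower bound of $2 - 2/(2^n-1)$, the constant $2$ is not tight, so the triangle inequalities used can afford to be coarse; the challenge is merely to correctly identify which robot --- $r$ or $r_1$ --- to analyze in each regime.
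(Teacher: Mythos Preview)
Your Case~1 is correct, but Case~2 is a genuine gap, not merely ``bookkeeping''. The averaging step shows only that \emph{some} earlier carrier has speed exceeding $v_r/2$; it says nothing about $r_1$ in particular. In fact the first carrier can be arbitrarily slow: place a robot of speed $\epsilon$ at $S$. In the optimal chain it carries the message an infinitesimal distance before handing off, so it is formally the first carrier $r_1$ with $\tau_1=0$, yet $\mathrm{online}(r_1)=|SD|/\epsilon$ is unbounded while $t^*$ stays essentially unchanged. So the dichotomy ``either $r$ works or $r_1$ works'' is false, and switching to the fast earlier carrier $r_j$ you identify does not obviously help either, since you have no control over $|p_jS|$.

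The paper's proof avoids identifying a witness robot altogether. It argues by contradiction: assume every robot satisfies $v_i < (|p_iS|+|SD|)/(2t^*)$, normalize $|SD|=1$, and observe that the worst case places all robots on the ray from $S$ through $D$ with one robot at every point. Under the assumed speed cap, the fastest robot that can be at position $x$ at time $t$ has speed strictly below $(1+x)/(2t^*-t)$, so the message's position obeys the differential inequality $x'(t) < (1+x)/(2t^*-t)$ with $x(0)=0$. Solving gives $x(t) < t/(2t^*-t)$, hence $x(t^*)<1$, contradicting delivery in time $t^*$. This global continuous argument is what makes the factor $2$ fall out cleanly; a direct triangle-inequality argument on a single carrier seems to yield only a factor of $3$.
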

\begin{inlineproof}
    To simplify notation, let the distance between the source and destination 
    be $1$.
    Observe that in the algorithm, every robot attempts to deliver the message 
    entirely by itself. The robots do not cooperate at all. 
    Clearly, then, if the robots can optimally deliver the message in $t^*$, 
    then Theorem~\ref{thm:pony_online} holds if and only if there exists at least 
    one robot that can deliver the message by itself in $2t^*$ time. 
    In other words, we must show that there is a robot $i$ with speed $v_i$ and 
    that starts a distance $s_i$ from the source such that 
    $\frac{1 + s_i}{v_i} \leq 2t^*$ ($\exists i$ such that 
    $v_i \geq \frac{1 + s_i}{2t^*}$).

    For the sake of contradiction, assume all robots have speed 
    $v_i < \frac{1 + s_i}{2t^*}$. 
    Then we must show that they could not possibly deliver the message optimally 
    in time $t^*$ or less. 
    By restricting the robots speed as a function of their distance to the 
    source, we allow the robots to choose everything else about their starting 
    positions to minimize the optimal delivery time.
    Clearly, robots will most quickly deliver the message if they are positioned 
    on the line from $[0, \infty)$ (robots are always moving directly toward the 
    message or its destination).

    Furthermore, the more robots in the system, the faster the message will be 
    delivered (observe that given two participating robots, inserting an 
    additional robot between them improves the delivery time).
    Therefore, we can assume there are an uncountably infinite number of robots 
    (one at every point on the interval $[0, \infty)$) and the problem becomes 
    continuous.

    Consider a robot that starts at position $s$ on the line.
    By construction, its velocity is less than $\frac{1+s}{2t^*}$ and after time 
    $t$ its position is $x > s - \frac{1 + s}{2t^*} t$.
    Therefore, if the message is at position $x$ on the line segment at time 
    $t$, its velocity at that moment is less than
    $
        \frac{1 + \frac{2 t^* x + t}{2t^* - t}}{2t^*} = \frac{1 + x}{2t^* - t}
    $
    since that is the upper bound on the speed of the fastest robot that could 
    reach $x$ by time $t$.

    It follows from the previous discussion that the speed of the message must 
    satisfy the inequality:
    $
        \frac{d~x(t)}{dt} < \frac{1 + x(t)}{2t^* - t} .
    $
    The resulting differential equation with unknown $x(t)$ and the initial 
    condition $x(0) = 0$, yields
    $
        x(t) < \frac{t}{2t^* - t} .
    $

    Finally, we can use this equation to show that the delivery time (when 
    $x(t)=1$) must be greater than $t^*$.
    Observe 
    $\frac{t}{2t^* - t} > 1 \Rightarrow t > 2t^* - t \Rightarrow t > t^*$,
    so the robots \textit{cannot} deliver the message in time $t^*$, a 
    contradiction.
    Therefore, if the optimal delivery is $t^*$, then there must exist a robot 
    $i$ with speed $v_i \geq \frac{1+s_i}{2t^*} \geq \frac{1+s_i}{2t^*}$ which 
    can deliver the message to the destination in at most $2t^*$ time. \qed
\end{inlineproof}

\begin{theorem}
    Given $n>2$ robots, there is a robot deployment such that 
    Algorithm~\ref{alg:approx1} has competitive ratio at least 
    $2 -\frac{2}{2^{n} -1} $.
\end{theorem}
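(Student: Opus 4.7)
My plan is to exhibit a specific $n$-robot deployment that forces Algorithm~\ref{alg:approx1} to have competitive ratio at least $2 - 2/(2^n - 1)$. With $S$ and $D$ fixed, I will place the $n$ robots on the line through $S$ and $D$ so that (i) every robot has the same individual delivery time $T$ for the $S \to D$ route (making the completion time of Algorithm~\ref{alg:approx1} equal to $T$), and (ii) the robots can cooperatively deliver the message in time $t^* \leq T/(2 - 2/(2^n - 1))$ via a chain relay. Without loss of generality set $|SD| = 1$, place robot~$1$ at $S$ with speed $v_1 = 1/T$, and choose each $p_i, v_i$ for $i \geq 2$ to satisfy $(p_i + 1)/v_i = T$; then the minimum of the individual online times is exactly $T$.

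The cooperative strategy is a chain relay in which each robot $i \geq 2$ moves leftward from time $0$ at full speed $v_i$ to a handoff point $m_{i-1}$, where it meets robot $i-1$ and takes over carrying the message, then moves rightward to the next handoff $m_i$ (or to $D$ if $i=n$). The synchronization conditions $(p_i - m_{i-1})/v_i = t_{i-1}$ (meeting robot $i-1$ at time $t_{i-1}$) combined with the online constraints $(p_i+1)/v_i = T$ determine $p_i$ and $v_i$ in terms of the handoff points $m_{i-1}$ and the arrival times $t_{i-1}$. A natural choice is to make the segments $m_i - m_{i-1}$ grow as a geometric progression of ratio $2$, so that the sum of carrying times telescopes with a factor $2^n - 1$, matching the denominator in the claimed bound. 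With the handoff points determined, one computes $t^* = \sum_{i=1}^n (m_i - m_{i-1})/v_i$ in closed form.

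The main obstacle is the explicit choice of positions and handoff points that exactly realizes the ratio $2 - 2/(2^n - 1)$: a naïve geometric doubling (e.g., $p_i = 2^{i-1} - 1$, $v_i = 2^{i-1}$) only gives a weaker ratio, so the construction must be tuned carefully. This will require setting up a recurrence between consecutive handoff points and verifying the closed-form expression $t^* = T(2^n - 1)/(2^{n+1} - 4)$ by induction on $n$. A secondary obstacle is arguing optimality of the described relay against all alternative cooperative strategies (e.g., message replication, different handoff orders, or strict subsets of the robots); this should follow from the fact that every robot in our chain already moves at full speed throughout, and successive speeds are chosen so that no robot ever idles, leaving no slack to exploit.
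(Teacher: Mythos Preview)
Your plan matches the paper's approach: place all robots on the line through $S$ and $D$, give every robot the same individual delivery time $T$ so that Algorithm~\ref{alg:approx1} finishes in exactly $T$, and exhibit a chain relay whose completion time is at most $T/(2-2/(2^n-1))$, with handoff segments doubling geometrically. The paper executes this with the explicit handoff points $m_i = 1 - (2^{i+1}-1)/(2^n-1)$ (so the gaps $m_{i-1}-m_i$ form a geometric progression of ratio~$2$), speeds determined recursively from the timing constraints, positions $p_i = 4v_i - 1$, and $T=4$. What is still missing from your proposal is precisely this explicit choice and its verification; your intention to set up and solve a recurrence is the right way to fill the gap, but it is not yet carried out.

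One correction: your ``secondary obstacle'' is a non-issue. The competitive ratio on this instance is $T/t^*$, where $t^*$ is the \emph{true} optimal delivery time. The relay you describe is merely \emph{some} feasible offline strategy, so its completion time is an \emph{upper bound} on $t^*$; hence $T/t^* \geq T/(\text{relay time})$. You therefore need not prove that your relay is optimal among all cooperative strategies (replication, other handoff orders, subsets of robots, etc.)---any feasible relay that is fast enough already yields the desired lower bound on the competitive ratio. Drop that part of the plan.
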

\begin{inlineproof}
    Let $S$ be the source and $D$ the destination. 
    Without loss of generality, we assume $SD$ is a unit line where $S$ is at 
    the origin and $D = (1,0)$.
    Given the meeting points, we construct an instance with $n$ robots with 
    competitive ratio $2\left(1-\frac{1}{2^{n} -1} \right)$.
    Let $m_{i} = 1 - \frac{2^{i+1}-1}{2^{n} - 1}$ be the meeting point of robot 
    $r_i$ and $r_{i+1}$ for $i \in [0, n-2]$. 
    In our construction, robots are  required to arrive at $D$ at the same time 
    after reaching $S$. Hence, we compute the speed from the meeting points as 
    follows:
    Let $t$ be the time that robot $r_{i}$ arrives at point $m_i$. 
    Therefore, it arrives at $D$ after reaching $S$ at time 
    $t + \left(2 - \frac{2^{i+1}-1}{2^{n} - 1}\right)/v_i$. 
    Further, at time $t$ robot $r_{i+1}$ has reached 
    $\left( \frac{m_{i+1} - 2^{i+1}}{2^{n} - 1} \right) / v_{i+1}$. 
    Therefore, it arrives at $D$ after reaching $S$ at time 
    $t + \left(2 -   \frac{2^{i+2}-1 - 2^{i+1}}{2^{n} - 1}\right)/v_{i+1}$. 
    Setting both equal and factorizing we obtain the speed $v_{i+1}$ given 
    $v_i$, i.e. 
    $v_{i+1} = v_i \left(1 - \frac{2^{i+2}}{2^{n+1} -2^{i+1} -3} \right)$. 
    Initially, we set $v_0 = 1$ which implies that $r_0$ is the fastest robot. 
    Let $p_i = v_i \cdot 4 - 1$ be the initial position or robot $r_i$ on the 
    line $SD$. 
    Observe that all robots are in the interval $[-1, 3]$. 
    We claim that the competitive ratio of the setting is 
    $2\left(1 - \frac{1}{2^{n} - 1}\right)$. 
    Observe that $r_0$ arrives at $m_0$ at time $2 + \frac{1}{2^{n} - 1}$. 
    Therefore, in the optimal algorithm it takes an additional 
    $\frac{1}{2^{n} - 1}$ to reach $D$ meanwhile in Algorithm~\ref{alg:approx1} 
    robot $r_0$ takes $4$ to reach $D$. 
    Therefore, the competitive ratio is $4/\left(2+\frac{2}{2^{n}-1}\right)$. 
    Simplifying we obtain $2 -\frac{2}{2^{n} -1}$ and the theorem follows.
    \qed
\end{inlineproof}

\begin{remark}
    Observe that for any $\epsilon > 0$ by taking $n > \log(1+ 2/\epsilon) $ we 
    have the competitive ratio of Algorithm~\ref{alg:approx1} is at least 
    $2 - \epsilon$.
\end{remark}

\section{Online Lower Bounds  for Two Robots}
\label{sec:lower_bounds_two_robots}

In this section we prove two lower bounds on the competitive ratio for arbitrary 
online algorithms. Our lower bounds require only two mobile robots. 
In the first lower bound (Theorem~\ref{thm:lowerbound1}) we assume that the 
speed of one of the robots is unknown and in the second 
(Theorem~\ref{thm:lowerbound2}) we assume that the starting position of one of 
the robots is unknown.
We provide both bounds (even though the second is slightly better) as the 
arguments are somewhat different and it seems plausible that an improved lower 
bound may come from  combining the two approaches. 

\begin{theoremrep}\label{thm:lowerbound1}
    The lower bound for the competitive ratio when the fast robot does not know 
    whether the speed of the slow robot is one or zero is at least $1.0391$.
\end{theoremrep}
\begin{appendixproof}
    In the proof we consider two robots $r_1$ and $r_v$ where $r_1$ is placed at 
    the source of the message and has speed either 0 or 1 as set by the 
    adversary and $r_v$ has speed greater than 1.
    Without loss of generality assume that $r_1$ and $r_v$ are at distance one. 
    By Lemma~\ref{lm:apollonius} the Apollonius circle is at distance 
    $\frac{v^2}{v^2 - 1}$ and has radius $\frac{v}{v^2 - 1}$. Let $C$ be the 
    center of the Apollonius circle. We place $C$ at the origin of the standard 
    orthogonal Cartesian coordinate system and $K$ (the position of $r_v$) at 
    position $(\frac{v^2}{v^2 - 1}, 0)$ as shown in 
    Figure~\ref{fig:lowerboundspeed}). 
    Let $D$ be the destination point and $M$ the meeting point when the speed of 
    $r_1$ is 1. 
    The ray $CM$ bisects the line segments $DM$ and $MK$.
    Let $M = \frac{v}{v^2 - 1} \cos\alpha$ and 
    $D = \frac{v^2}{v^2 - 1} \cos(2\alpha)$.

    Observe that if $r_v$ does not know the speed of $r_1$, then they can choose 
    a point $X$ to meet. If they meet, then $r_v$ takes the message to $D$. 
    However, if they don't meet at point $X$, then $r_v$ concludes that $r_1$ 
    has speed 0. Therefore, it reaches $S$ and then it moves toward $D$. 
    We claim that $X$ is in the circumference of the Apollonius circle. 
    Observe that any strategy where the fast robot waits is suboptimal since
    they can always meet at a point closer to the Apollonius circle. 
    Therefore, $X$ cannot be outside the Apollonius circle.
    Suppose by contradiction that $X'$ is inside the Apollonius circle. 
    Let $Y$ be the point in the line $KX'$ and the Apollonius circle. 
    From the triangle inequality, $|YD| \leq |YX'| + |X'D|$ and 
    $|YS| \leq |YX'| + |X'S|$ which contradicts the assumption.

    \begin{figure}[!htb]
        \begin{center}
        \includegraphics[scale=0.7]{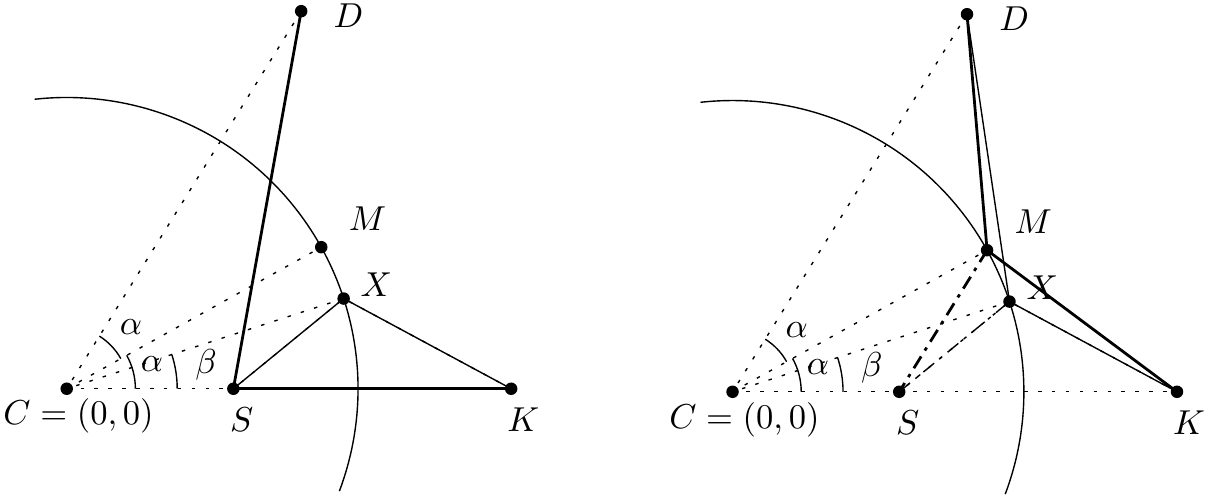}
        \end{center}
        \caption{% 
            Configuration of two robots $r_1, r_v$ such that the speed of % 
            $r_1$ is either $0$ or $1$ as determined by an adversary. % 
            The left figure shows the trajectory of $r_v$ when $r_1$ has speed %
            0 and the right figure shows the trajectories of the robots when % 
            $r_1$ has speed 1 (Bold  solid lines show the optimal trajectory %
            of $r_v$, bold dash-dotted lines show the optimal trajectory of %
            $r_1$, meanwhile solid lines show the trajectory of $r_v$ and dash %
            dotted line show the trajectory of $r_1$ of the online algorithm.%
        }
        \label{fig:lowerboundspeed}
    \end{figure}

    Let $\beta = \angle(XCK)$.
    Consider the case where the speed of $r_1$ is one. The competitive ratio 
    is: $$\frac{(|KX| + |XD|)/v}{\min(SD, (|KM|+|MD|)/v)}$$ .
    To calculate the distance we use  the Law of Cosines since 
    $|CK| = |CD| = \frac{v^2}{v^2-1}$, $|CM| = |CX|= \frac{v^2}{v^2-1}$ and 
    $|CS| = \frac{1}{v^2-1}$.
    Observe that  $r_1$ can deliver the message solo in optimal time when 
    $v |SD| \leq 1+ |SD|$ as well when $v|SD| < 2|KM|$.

    Now consider the case where the speed of $r_1$ is zero. 
    Here the competitive ratio is $$\frac{(|KX|+|XS|+|SD|)/v}{(|KS|+|SD|)/v)}$$.
    Since $X$ is at the circumference of the Apollonius circle, $|XS|=|KS|/v$.
    Observe that the maximum value occurs at the equilibrium point. 
    Simplifying we get, 
    $$
        \frac{|KX| + |XD|}{\min(v |SD|, 2|KM|)} 
            = \frac{|KX|( 1 + 1/v)+|SD|}{1+|SD|}
    $$.

    To obtain the best competitive ratio, for each speed $v \in (1,2]$, we 
    compute the angle $\alpha$ and $\beta$ that maximizes the competitive ratio. 
    % The plot is depicted in Figure~\ref{fig:lowerboundresults1}.
    The best competitive ratio is at least $1.0391$ that  occurs at speed 
    $v=1.65$ and angle $\alpha = 0.6597$ and $\beta = 0.2312$.
    % \begin{figure}[!htb]
    %     \begin{center}
    %     \includegraphics[scale=0.4]{}~~~~
    %     \includegraphics[scale=0.4]{}
    %     \end{center}
    %     \caption{Left: Speed vs Competitive ratio. Right: Speed vs Angle.}
    %     \label{fig:lowerboundresults1}
    % \end{figure}
    \qed
\end{appendixproof}

\begin{theorem}\label{thm:lowerbound2}
    The lower bound for the competitive ratio when the slow robot does not know 
    the position of the fast robot is at least $1.04059$.
\end{theorem}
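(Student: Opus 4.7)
The plan is to construct an adversarial two-robot instance in which the slow robot cannot avoid a $1.04059$ ratio. Place the slow robot $r_1$ (speed $1$) at the source $S = (0,0)$, set $D = (1,0)$, and let the adversary choose the position of the fast robot $r_v$ (speed $v$) as one of two symmetric points $K^+ = (k,h)$ or $K^- = (k,-h)$ straddling line $SD$. By Lemma~\ref{lm:apollonius} and Theorem~\ref{thm:source}, the offline optimum $t^*$ is identical for the two placements, so the competitive ratio of any online trajectory $\gamma$ for the slow robot is at least $\max_{\pm} T(\gamma, K^\pm)/t^*$, where $T(\gamma, K)$ is the delivery time when the fast robot, knowing $K$, $S$, $D$ and the slow robot's deterministic strategy, plans optimally.

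The first step is to reduce to $\gamma(t)=(t,0)$. Given any trajectory $\gamma$, I would consider the interpolating family $\gamma_\lambda(t) = (\gamma(t)_x, (1-2\lambda)\gamma(t)_y)$ for $\lambda \in [0,1]$, which deforms $\gamma$ into its $SD$-reflection $\bar\gamma$ and passes through the on-axis trajectory at $\lambda = 1/2$. Since $|\gamma_\lambda'(t)| \leq |\gamma'(t)| \leq 1$, each $\gamma_\lambda$ is a valid trajectory; moreover $T(\gamma_0, K^\pm) = T(\gamma_1, K^\mp)$ by reflection, so an intermediate-value argument combined with an averaging bound (justified via the straight-line structure of the fast robot's optimal response from Lemma~\ref{lm:optimal_m}) shows that the on-axis member attains a worst-case ratio no larger than $\gamma$'s, after which moving at full speed toward $D$ can only reduce the delivery time.

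With $\gamma(t)=(t,0)$ fixed, the meeting point $(x_m, 0)$ is the positive root of $(k-x_m)^2 + h^2 = v^2 x_m^2$, the online delivery time is $T(K) = x_m + (1 - x_m)/v$, and the offline optimum $t^*(K)$ is given by the bisector characterization of Theorem~\ref{thm:source} on the Apollonius circle of $S$ and $K$. The ratio $T(K)/t^*(K)$ then becomes an explicit function of $(v, k, h)$ which I would maximize numerically over the Case~$3$ region of Section~\ref{sec:offline_two_robots} (where neither robot can deliver the message alone), verifying that the maximum attains the claimed bound of $1.04059$. The main obstacle is making the symmetry reduction to on-axis trajectories fully rigorous: $T(\cdot, K)$ is not manifestly convex along the interpolating family, so the averaging step must either be justified directly from the structure of optimal fast-robot responses, or bypassed by an explicit first-order argument ruling out profitable off-axis deviations.
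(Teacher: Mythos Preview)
Your strategy is essentially the paper's: two mirror placements $K^\pm$ of the fast robot about $SD$, a symmetry argument forcing the slow robot onto the axis, and a numerical optimization of the resulting ratio. The one substantive difference is parametrization. The paper fixes $|SK|=1$ and places $D$ at the specific distance that makes $|CD|=|CK|=\tfrac{v^2}{v^2-1}$ (with $C$ the Apollonius centre); this forces $|KM|=|MD|$ at the offline optimum, so $t^*=2|KM|/v$ has a closed form and the search runs over just two parameters $(v,\alpha)$ without ever invoking the bisector computation of Theorem~\ref{thm:source}. Your $(v,k,h)$ search is strictly more general but must compute $t^*$ through that theorem at every point.

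Regarding the gap you flag: the paper does not use an interpolating family or any convexity of $T(\cdot,K)$. It treats the slow robot's online strategy as the choice of a meeting point $X$; if $X$ lies off $SD$ the adversary selects the $K^\pm$ on the far side, and projecting $X$ orthogonally onto $SD$ simultaneously shortens $|SX|$, $|K^\mp X|$, and $|XD|$, so the on-axis projection can only decrease the delivery time against that adversary choice while (by symmetry) equalizing the two cases. Having restricted to the axis, the Lemma~\ref{lm:optimal_m} argument pins $X$ to the common intersection of the two Apollonius circles with $SD$. This two-line projection argument replaces your homotopy entirely and sidesteps the convexity obstacle.
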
    
 \begin{proof}
    Let the source $S$ be at the origin. Given $v$ and $\alpha$, we place the 
    destination $D$ at $$\left(\frac{1}{v^2-1} \sqrt{1+v^4-2 v^2 \cos(2\alpha)}, 0\right)$$
    as shown in Figure~\ref{fig:lowerboundposition}). 
    Let $C_1, C_2$ be the points at distance $\frac{1}{v^2 - 1}$ and 
    $\frac{v^2}{v^2 - 1}$  from $S$ and $D$ each. 
    Let $\beta = \angle(DSC_1) = \angle(DSC_2)$. 

    \begin{figure}[!htb]
        \begin{center}
        \includegraphics[scale=0.8]{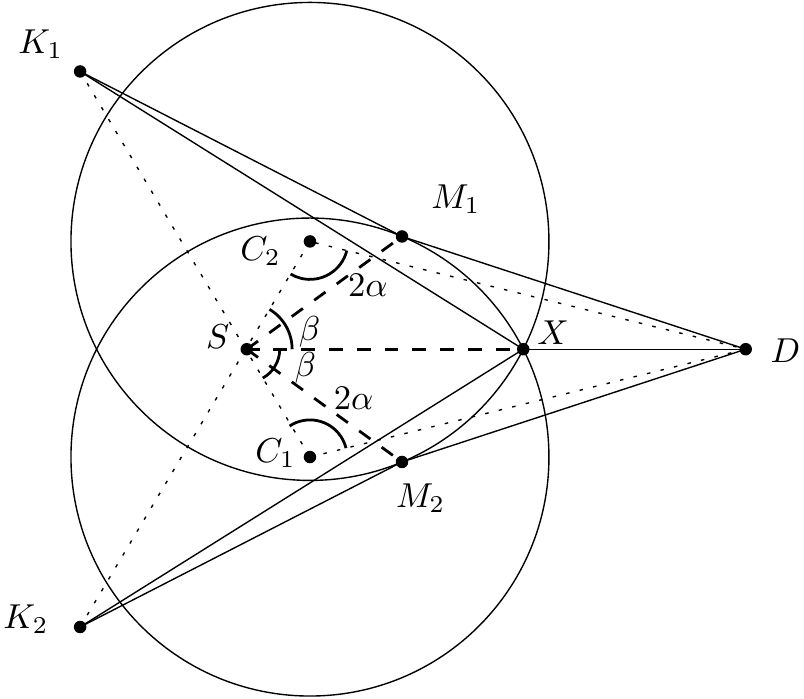}
        \end{center}
        \caption{%
            Configuration of two robots $r_1, r_v$ such that the position of % 
            $r_v$ is unknown for $r_1$ and  determined by an adversary. 
            Solid lines show the trajectory of $r_v$, bold dashed lines show %
            the optimal trajectory of $r_1$ and dash line show the trajectory %
            of $r_1$.%
        }
        \label{fig:lowerboundposition}
    \end{figure}

    Let $K_1 = (-\cos\beta, \sin(\pi - \beta))$ and 
    $K_2 = -(\cos\beta, \sin(-\beta))$.
    Let $r_1$ be at $S$ and $r_v$ be at either $K_1$ or $K_2$ set by the 
    adversary.
    Observe that $C_1$ is the center of the Apollonius circle with radius 
    $\frac{v}{v^2 -1}$ if $r_v$ is at $k_1$ and $C_2$ is the  center of the 
    Apollonius circle with radius $\frac{v}{v^2 -1}$ if $r_v$ is at $k_2$.

    Let $M_1$  and $M_2$ be the meeting points of the optimal strategy. 
    Let $$|K_1M_1| = |K_2M_2| = \frac{v}{v^2 -1}\sqrt{ 1 + v^2 - 2v \cos\alpha}.$$
    Observe that if $r_1$ does not know the position of $r_v$, then it should 
    meet in a point that is at the same distance from the Apollonius circles. 
    Let $X$ be the meeting point of the online algorithm. 
    Observe that $X$ must lie on $SD$ otherwise the adversary will place the destination in the opposite side. 
    Arguing as in Lemma~\ref{lm:optimal_m} we can conclude that $X$ must be at the circumference of the Apollonius circle. 
    Then the competitive ratio is given by: 
    $$\frac{|SX|(v-1) +SD}{2 |K_1M_1|}$$.

    From the law of Sines $\sin\beta = |C_1D| \sin(2 \alpha) / |SD|$. 
    Consider the right triangles $\triangle(SX_1C_1)$ and $\triangle(XX_1C_1)$ 
    where $X_1$ is on the line $SD$. Let $|X_1C_1| = |SC_1| \sin\beta$. 
    Then, $|SX_1| = \sqrt{|SC_1|^2 -  |X_1C_1|^2}$ and 
    $X_1X = \sqrt{|C_1X|^2 - |X_1C_1|^2}$ where $|C_1X| = \frac{v}{v^2-1}$. 
    Observe that if $|C_1D|^2 \leq |SD|^2 + |C_1D|^2$, then $X_1$ is in between 
    $SX$ and, therefore, $|SX| =  |SX_1| + |X_1X|$, otherwise, 
    $|SX|=|X_1X|-|SX_1|$.

    To obtain the best competitive ratio, for each angle between $0$ and 
    $\pi/2$, we compute the best competitive ratio for all speeds. 
    % The plot is depicted in Figure~\ref{fig:lowerboundresults}.
    The best competitive ratio is at least $1.04059$ that occurs at angle 
    $\alpha = 0.8953$ and $v = 2.7169$.

    % \begin{figure}[!htb]
    % \begin{center}
    % \includegraphics[scale=0.4]{}~~~~
    % \includegraphics[scale=0.4]{}
    % \end{center}
    % \caption{Left: Angle vs Competitive ratio. Right: Angle vs Speed.  }
    % \label{fig:lowerboundresults}
    % \end{figure}
    \qed
\end{proof}

\section{Conclusion}
\label{sec:conclusion}

In this paper we studied the pony express communication problem for delivering 
a message from a source to a destination in the plane. We gave an optimal 
offline algorithm for the case of two robots and a $\sqrt{2}$ approximation for 
$n$ robots. We studied a particular simple online algorithm and provided tight 
bounds for its performance in both the two robot and $n$ robot case. 
Finally, we gave two distinct arguments for lower bounds on the competitive 
ratio of any online algorithm.

Our investigations leave a number of open problems. Of special interest is the 
complexity of the offline problem for the case of $n$ robots. 
While it seems unlikely, we can not even be sure the question of deciding if an 
instance can be solved in a given time bound is decidable as the equations we 
derive involve trigonometric functions. 
In the online setting, there remains a gap between the upper bounds and lower 
bounds in both the case of just two robots and the case of $n$ robots. 
Based upon this preliminary investigation, it would appear the exact bound will 
vary with the number of robots considered. 

Additional questions arise when one attempts to solve well-known communication 
tasks such as broadcast and converge-cast.
The delivery task considered in this paper made use of our ability to freely 
replicate the message which is to be delivered from the source to the 
destination. 
However, the problem is also interesting if the message to be delivered can be 
replicated only a given fixed number of times or even if it cannot be 
replicated at all (e.g., it is a physical object) in which case the problem 
resembles a transportation problem. Another interesting question would be to 
study the pony express communication problem in a setting where the robots may 
be subject to faults.  

\bibliographystyle{plain}
\bibliography{refs}

\begin{thebibliography}{10}

\bibitem{AnayaCCLPV16}
Julian Anaya, J{\'{e}}r{\'{e}}mie Chalopin, Jurek Czyzowicz, Arnaud Labourel,
  Andrzej Pelc, and Yann Vax{\`{e}}s.
\newblock Convergecast and broadcast by power-aware mobile agents.
\newblock {\em Algorithmica}, 74(1):117--155, 2016.

\bibitem{bartschi2016energy}
Andreas B{\"a}rtschi, J{\'e}r{\'e}mie Chalopin, Shantanu Das, Yann Disser,
  Daniel Graf, Jan Hackfeld, and Paolo Penna.
\newblock Energy-efficient delivery by heterogeneous mobile agents.
\newblock {\em arXiv preprint arXiv:1610.02361}, 2016.

\bibitem{Bartschi0M18}
Andreas B{\"{a}}rtschi, Daniel Graf, and Mat{\'{u}}s Mihal{\'{a}}k.
\newblock Collective fast delivery by energy-efficient agents.
\newblock In Igor Potapov, Paul~G. Spirakis, and James Worrell, editors, {\em
  43rd International Symposium on Mathematical Foundations of Computer Science,
  {MFCS} 2018, August 27-31, 2018, Liverpool, {UK}}, volume 117 of {\em
  LIPIcs}, pages 56:1--56:16. Schloss Dagstuhl - Leibniz-Zentrum f{\"{u}}r
  Informatik, 2018.

\bibitem{bartschi2017truthful}
Andreas B{\"a}rtschi, Daniel Graf, and Paolo Penna.
\newblock Truthful mechanisms for delivery with mobile agents.
\newblock {\em arXiv preprint arXiv:1702.07665}, 2017.

\bibitem{bartschi2017energy}
Andreas B{\"a}rtschi and Thomas Tschager.
\newblock Energy-efficient fast delivery by mobile agents.
\newblock In {\em International Symposium on Fundamentals of Computation
  Theory}, pages 82--95. Springer, 2017.

\bibitem{beregrobustness}
Sergey Bereg, Andrew Brunner, Luis-Evaristo Caraballo, Jos{\'e}-Miguel
  D{\'\i}az-B{\'a}{\~n}ez, and Mario~A Lopez.
\newblock On the robustness of a synchronized multi-robot system.
\newblock {\em Journal of Combinatorial Optimization}, pages 1--29, 2020.

\bibitem{bilo2021new}
Davide Bil{\`o}, Luciano Gual{\`a}, Stefano Leucci, Guido Proietti, and Mirko
  Rossi.
\newblock New approximation algorithms for the heterogeneous weighted delivery
  problem.
\newblock In {\em International Colloquium on Structural Information and
  Communication Complexity}, pages 167--184. Springer, 2021.

\bibitem{carvalho2019efficient}
Iago~A Carvalho, Thomas Erlebach, and Kleitos Papadopoulos.
\newblock An efficient algorithm for the fast delivery problem.
\newblock In {\em International Symposium on Fundamentals of Computation
  Theory}, pages 171--184. Springer, 2019.

\bibitem{carvalho2021fast}
Iago~A Carvalho, Thomas Erlebach, and Kleitos Papadopoulos.
\newblock On the fast delivery problem with one or two packages.
\newblock {\em Journal of Computer and System Sciences}, 115:246--263, 2021.

\bibitem{chalopin2013data}
J{\'e}r{\'e}mie Chalopin, Shantanu Das, Mat{\'u}{\v{s}} Mihal’{\'a}k, Paolo
  Penna, and Peter Widmayer.
\newblock Data delivery by energy-constrained mobile agents.
\newblock In {\em International Symposium on Algorithms and Experiments for
  Sensor Systems, Wireless Networks and Distributed Robotics}, pages 111--122.
  Springer, 2013.

\bibitem{chalopin2006mobile}
J{\'e}r{\'e}mie Chalopin, Emmanuel Godard, Yves M{\'e}tivier, and Rodrigue
  Ossamy.
\newblock Mobile agent algorithms versus message passing algorithms.
\newblock In {\em International Conference On Principles Of Distributed
  Systems}, pages 187--201. Springer, 2006.

\bibitem{chalopin2014data}
J{\'e}r{\'e}mie Chalopin, Riko Jacob, Mat{\'u}{\v{s}} Mihal{\'a}k, and Peter
  Widmayer.
\newblock Data delivery by energy-constrained mobile agents on a line.
\newblock In {\em International Colloquium on Automata, Languages, and
  Programming}, pages 423--434. Springer, 2014.

\bibitem{chuang2021}
Hoda Chuangpishit, Jurek Czyzowicz, Ryan Killick, Evangelos Kranakis, Danny
  Krizanc, and Oscar Morales-Ponce.
\newblock Optimal rendezvous on a line by location-aware robots in the presence
  of spies.
\newblock {\em Discrete Mathematics Algorithms and Applications}, to appear,
  2021.

\bibitem{coleman2021pony}
Jared Coleman, Evangelos Kranakis, Danny Krizanc, and Oscar~Morales Ponce.
\newblock The pony express communication problem.
\newblock {\em in proceedings of IWOCA21; also extended version as arXiv
  preprint 2105.03545}, 2021.

\bibitem{czyzowicz2016communication}
Jurek Czyzowicz, Krzysztof Diks, Jean Moussi, and Wojciech Rytter.
\newblock Communication problems for mobile agents exchanging energy.
\newblock In {\em International Colloquium on Structural Information and
  Communication Complexity}, pages 275--288. Springer, 2016.

\bibitem{czyzowicz2019patrolling}
Jurek Czyzowicz, Kostantinos Georgiou, and Evangelos Kranakis.
\newblock Patrolling.
\newblock In {\em Distributed Computing by Mobile Entities}, pages 371--400.
  Springer, 2019.

\bibitem{czyzowicz2020gathering}
Jurek Czyzowicz, Ryan Killick, Evangelos Kranakis, Danny Krizanc, and Oscar
  Morales-Ponce.
\newblock Gathering in the plane of location-aware robots in the presence of
  spies.
\newblock {\em Theoretical Computer Science}, 836:94--109, 2020.

\bibitem{das2015collaborative}
Shantanu Das, Dariusz Dereniowski, and Christina Karousatou.
\newblock Collaborative exploration by energy-constrained mobile robots.
\newblock In {\em International Colloquium on Structural Information and
  Communication Complexity}, pages 357--369. Springer, 2015.

\bibitem{das2018collaborative}
Shantanu Das, Dariusz Dereniowski, and Christina Karousatou.
\newblock Collaborative exploration of trees by energy-constrained mobile
  robots.
\newblock {\em Theory of Computing Systems}, 62(5):1223--1240, 2018.

\bibitem{ogilvy1990excursions}
C.~S. Ogilvy.
\newblock {\em Excursions in geometry}.
\newblock Dover Publications, 1990.

\bibitem{yiu2001introduction}
Paul Yiu.
\newblock {\em Introduction to the Geometry of the Triangle}.
\newblock Version 4.0510, Florida Atlantic University Lecture Notes, 2004.

\end{thebibliography}

\end{document}